\documentclass[12pt]{amsart}
\usepackage{amssymb}
\usepackage{fullpage}
\usepackage{graphicx}
\usepackage{enumerate}
\input pictex.sty

\oddsidemargin=-0.5cm
\evensidemargin=-0.5cm
\textwidth=18cm
\topmargin=-2cm
\headsep=1cm
\headheight=2cm
\vfuzz2pt 
\hfuzz2pt 
\newtheorem{thm}{Theorem}[section]

\newtheorem{lem}[thm]{Lemma}

\theoremstyle{definition}

\theoremstyle{remark}

\numberwithin{equation}{section}


\newcommand{\p}{\partial}
\newcommand{\no}{\nonumber}



\newcounter{space}

\begin{document}

\title{Self-Similar Solutions of the Non-Strictly Hyperbolic Whitham Equations for the KdV Hierarchy}%
\author{V. U. Pierce}
\address{Department of Mathematics, Ohio State University, 231 W. 18th Avenue, Columbus, OH 43210}%
\email{vpierce@math.ohio-state.edu}
\author{Fei-Ran Tian}%
\address{Department of Mathematics, Ohio State University, 231 W. 18th Avenue, Columbus, OH 43210}
\email{tian@math.ohio-state.edu}

\thanks{}%
\subjclass{}%
\keywords{}%

\begin{abstract}
We study the Whitham equations for 
all the higher order KdV equations. The Whitham equations are neither strictly hyperbolic nor
genuinely nonlinear. We are interested in the solution of the Whitham equations
when the initial values are given by a step function. 
\end{abstract}
\maketitle

\markboth{ V.U. PIERCE AND F.-R. TIAN}{
SELF-SIMILAR SOLUTION OF THE WHITHAM EQUATIONS}
\pagestyle{myheadings}

\section{Introduction}

It is known that the solution of the KdV equation
\begin{equation} 
\label{eq1}
u_t + 6 u u_x + \epsilon^2 u_{xxx}  =  0 
\end{equation} 
has a weak limit as $\epsilon \to 0$ while the initial values 
$$u(x, 0; \epsilon) = u_0(x)$$
are fixed.

This weak limit is described by hyperbolic equations. It satisfies the Burgers equation
\begin{equation}
\label{Burgers}
u_t + (3 u^2)_x = 0
\end{equation}
until its solution develops shocks. Immediately after shock, the weak limit is governed
by the Whitham equations \cite{lax, lax2, ven, whi} 
\begin{equation}
\label{KdVW}
u_{it} + \lambda_i(u_1, u_2, u_3) u_{ix} = 0 \ , \quad i=1, 2, 3,
\end{equation}
where the $\lambda_i$'s are given by formulae (\ref{lambda}).
Equations (\ref{KdVW}) form a $3 \times 3$ system of hyperbolic equations. 
After the breaking of the solution of (\ref{KdVW}), the weak limit is described by a $5 \times 5$
systems of
hyperbolic equations similar to (\ref{KdVW}). Similarly, after the solution of the $5 \times 5$
system breaks down, the weak limit is characterized by a $7 \times 7$ system of
hyperbolic equations. In other words, for general initial data $u_0(x)$, one constructs
the weak limit by patching together solutions of (\ref{Burgers}), (\ref{KdVW}), $5 \times 5$,
$7 \times 7$, etc systems in the $x$-$t$ plane.

The KdV equation (\ref{eq1}) is just the first of an infinite sequence of equations.
All these so-called higher order KdV equations can be
cast in
the Hamiltonian form
\begin{equation}
\label{hierarchy}
\frac{\partial
u}{\partial
t} + \frac{\partial}{\partial x } \frac{\delta H_m}{\delta u} = 0 
\ , \quad  m=1, 2,
\cdots \ ,
\end{equation}
where $H_m$'s form a sequence of conserved
functionals for
the KdV
equation. The small dispersive parameter $\epsilon$ is hidden in $H_m$.   
In particular, when $m=1$, (\ref{hierarchy}) is the KdV equation.

The solution of the higher order KdV equation (\ref{hierarchy}) also has a weak limit as $\epsilon \to 0$.
As in the KdV case, this weak limit satisfies the Burgers type
equation 
\begin{equation}
\label{mBurgers}
u_t + ({C_m \over m + 1} u^{m+1})_x = 0 \ , 
\end{equation}
where $C_m$ is given in (\ref{Cm}), until the solution of (\ref{mBurgers}) forms a shock. 
After the breaking of the solution of (\ref{mBurgers}),
the limit is governed by equations similar
to (\ref{KdVW}), namely,
\begin{equation}
\label{mKdVW}
u_{it} + \mu_i^{(m)}(u_1, u_2, u_3) u_{ix} = 0 \ , \quad i=1, 2, 3,
\end{equation}
where $\mu_i^{(m)}$'s are given in (\ref{eq18a}). They will also be called the Whitham equations.
As in the KdV case, after the solution of (\ref{mKdVW}) breaks down, the weak limit is described
by a $5 \times 5$ system of hyperbolic equations.

In this paper, we are interested in the solution of the Whitham equations
for the higher order KdV (\ref{hierarchy}) with a step-like initial function
\begin{equation} \label{step}
u_0(x) = \left\{ \begin{matrix} 1 & x < 0 \\
0 & x > 0 \ .\end{matrix} \right.  
\end{equation}
For such an initial function, the solution of the Burgers type 
equation (\ref{mBurgers}) has
already developed a shock at the initial time, $t=0$.
Hence, immediately
after $t=0$, the Whitham equations (\ref{mKdVW}) kick in.  Solutions of
(\ref{mKdVW}) occupy some domains of the space-time while solutions of
(\ref{mBurgers}) occupy other domains.  These solutions are matched on
the boundaries of the domains.

Equations (\ref{Burgers}) and equations (\ref{mBurgers}) are prototypes in the theory of hyperbolic 
conservation laws \cite{lef}. Their solutions will generally develop shocks in finite times. 
The solutions can be extended beyond the singularities as the entropy solutions.    

Solutions of equations (\ref{Burgers}) or equations (\ref{mBurgers}), in the theory of the zero dispersion limit,
are not extended as weak or entropy solutions after the formation of singularities. Instead, they are extended to 
match the Whitham solutions of (\ref{KdVW}) or (\ref{mKdVW}). For initial data (\ref{step}),
the resulting solutions of the Whitham equations (\ref{mKdVW}) will be seen to be more complex than those
of (\ref{KdVW}) in the KdV case. 

The KdV case with the step-like initial data (\ref{step}) was first studied by Gurevich and Pitaevskii
\cite{gur}.  
They found that it was enough to use the Burgers solution of (\ref{Burgers}) and the Whitham solution of
(\ref{KdVW}) to cover the whole $x$-$t$ plane, without going to the $5 \times 5$ or $7 \times 7$ system. 
Namely, the space-time is divided into three parts 
$$(1) \   \frac{x}{t} < -6 \ , \quad
(2) \   -6 < \frac{x}{t} < 4  \ , \quad
(3) \  \frac{x}{t} > 4 \ .$$ 
The solution of (\ref{Burgers}) occupies the first and third parts, 
\begin{equation} 
\label{eq9a}
u(x, t) \equiv 1  \quad \mbox{when $\frac{x}{t} < -6$} \ , \quad 
u(x, t) \equiv 0 \quad  \mbox{when $\frac{x}{t} > 4$} \ .
\end{equation}
The Whitham solution of (\ref{KdVW}) lives in the second part, 
\begin{equation}
u_1(x, t) \equiv 1 \ , \quad
\frac{x}{t} = \lambda_2(1, u_2, 0) \ , \quad
u_3(x, t) \equiv 0 \ , 
\label{eq9b}
\end{equation}
when $-6 < x/t < 4$.  

Whether the second equation of (\ref{eq9b}) can be inverted to give $u_2$ as a function of the
self-similarity variable $x/t$ hinges on whether 
\begin{equation*}  
\frac{\p \lambda_2}{\p u_2} (1, u_2, 0) \neq 0. 
\end{equation*}
Indeed, Levermore \cite{lev} has proved the genuine nonlinearity of
the Whitham equations (\ref{KdVW}), i.e., 
\begin{equation}\label{eq11}
\frac{\p \lambda_i}{\p u_i} (u_1, u_2, u_3) > 0, \,
\quad i=1, 2, 3, 
\end{equation}
for $u_1 > u_2 > u_3$.  

For the higher order KdV (\ref{hierarchy}),  equations (\ref{mKdVW}), in
general, are not genuinely nonlinear, i.e., a property like
(\ref{eq11}) is not available.  Hence, solutions like (\ref{eq9a}) and (\ref{eq9b})
need to be modified.  

Our construction of solutions of the Whitham
equation (\ref{mKdVW}) makes use of the non-strict
hyperbolicity of the equations.  For KdV, it is known that the Whitham
equations (\ref{KdVW}) are strictly hyperbolic, namely:
\begin{equation*}
\lambda_1(u_1, u_2, u_3) > \lambda_2(u_1, u_2, u_3) > 
\lambda_3(u_1, u_2, u_3) 
\end{equation*}
for $u_1 > u_2 > u_3$ \cite{lev}.  For the higher order KdV (\ref{hierarchy}), different eigenspeeds of (\ref{mKdVW}),
$\mu_i^{(m)}(u_1, u_2, u_3)$'s, may coalesce in the region $u_1 > u_2 > u_3$ \cite{PT}.  

For the higher order KdV with step-like initial function (\ref{step}), the space time is 
divided into four regions (see Figure 1.) 
$$(1) \ \frac{x}{t} < \alpha \ , \quad
(2) \ \alpha < \frac{x}{t} < \beta \ , \quad
(3) \ \beta < \frac{x}{t} < 4^m \ , \quad
(4) \ \frac{x}{t} > 4^m \ ,$$
where $\alpha$ and $\beta$ are some constants.
In the first and fourth regions, the solution of (\ref{mBurgers}) governs
the evolution:
$$u(x, t) \equiv 1 \quad \mbox{where $x/t < \alpha$} \  \mbox{and} \ 
u(x, t) \equiv 0 \quad \mbox{where $x/t > 4^m$} \ .$$
The Whitham solution of (\ref{mKdVW}) lives in the second and third
regions; namely:
\begin{equation} 
\label{ns}
u_1(x, t) \equiv 1 \ , \quad 
\frac{x}{t} = \mu_2^{(m)}(1, u_2, u_3) \ , \quad
\frac{x}{t} = \mu_3^{(m)}(1, u_2, u_3) \ , 
\end{equation}
when $\alpha < x/t < \beta$, and
\begin{equation}
\label{ns2}
u_1(x, t) \equiv 1 \ , \quad 
\frac{x}{t} = \mu_2^{(m)}(1, u_2, 0) \ , \quad  u_3(x, t) \equiv 0 \ , 
\end{equation}
when $\beta < x/t < 4^m$.

Equations (\ref{ns}) yield
$$ \mu_2^{(m)}(1, u_2, u_3) = \mu_3^{(m)}(1, u_2, u_3)$$
on a curve in the region $0< u_3 < u_2 < 1$.
This implies the non-strict hyperbolicity of the Whitham equations (\ref{mKdVW}) for
the KdV hierarchy.

The $m=2$ case has been studied in \cite{PT}. There, inequalities
\begin{equation}
{\p \mu_3^{(m)} \over \p u_3} < {3 \over 2} \ {\mu_2^{(m)} - \mu_3^{(m)} \over u_2 - u_3} < 
{\p \mu_2^{(m)} \over \p u_2} \quad \mbox{for $u_1 > u_2 > u_3 > 0$} \label{INQ}
\end{equation}
have played a crucial role in verifying that equations (\ref{ns}) or (\ref{ns2}) can 
indeed be solved to give the solution of the Whitham equations (\ref{mKdVW}) when 
$m=2$.

For $m>2$, inequalities (\ref{INQ}) are not valid any more. We therefore must
use a different approach to solve the problem. The calculations are considerably more difficult
than in the $m=2$ case. This is mainly because $q$ of (\ref{q}) is a polynomial of degree $m$ when $m >2$
while it is only a quadratic polynomial when $m=2$. 

The organization of the paper is as follows. In Section 2, we will study the
eigenspeeds, $\mu_i^{(m)}$'s, of the Whitham equations (\ref{mKdVW}).
In Section 3, we will construct the self-similar solution of the Whitham equations
for the initial function (\ref{step}).
In Section 4, we will use the self-similar solution of Section 3 to
construct the minimizer of a variational problem for the zero dispersion limit 
of the KdV hierarchy. 

In a subsequent publication, we will study the Whitham solutions for all the other
step-like initial data.


\section{The Whitham Equations}

In this section we define the eigenspeeds of the Whitham equations for
both the KdV (\ref{eq1}) and higher order KdV (\ref{hierarchy}). We first introduce the 
polynomials of $\xi$ for $n=0, 1,
2, \dots$ \cite{dub, kri, Tian3}:
\begin{equation} \label{eq15} 
P_n(\xi, u_1, u_2, u_3) = \xi^{n+1} + a_{n, 1} \xi^n + \dots + a_{n,
  n+1} \ ,
\end{equation}
where the coefficients, $a_{n, 1}, a_{n, 2}, \dots, a_{n, n+1}$ are
uniquely determined by the two conditions
\begin{equation} \label{eq16}
\frac{ P_n(\xi, u_1, u_2, u_3) }{\sqrt{ (\xi - u_1)(\xi-u_2)(\xi-u_3)}
} = \xi^{n-1/2} + \mathcal{O}(\xi^{-3/2}) \quad  \mbox{for large $|\xi|$} 
\end{equation}
and
\begin{equation} \label{eq17}
\int_{u_3}^{u_2} \frac{ P_n(\xi, u_1, u_2, u_3)}{\sqrt{ (\xi- u_1)
    (\xi-u_2)(\xi-u_3)}} d\xi = 0 \ .
\end{equation}
Here the sign of the square root is given by $\sqrt{(\xi - u_1)(\xi - u_2)(\xi - u_3)} > 0$
for $\xi > u_1$ and the branch cuts are along $(- \infty, u_3)$ and $(u_2, u_1)$.

In particular, 
\begin{equation}
\label{P01}
P_0(\xi,u_1, u_2, u_3) = \xi + a_{0, 1} \ , \quad
P_1(\xi,u_1, u_2, u_3) = \xi^2 - \frac{1}{2} (u_1 + u_2 + u_3) \xi + a_{1, 2} \ ,
\end{equation}
where 
\begin{align*}
a_{0, 1} &= (u_1 - u_3) \frac{E(s)}{K(s)} - u_1 \ , \\
a_{1, 2} &= \frac{1}{3} ( u_1 u_2 + u_1 u_3 + u_2 u_3 ) + \frac{1}{6}
(u_1 + u_2 + u_3) a_{0, 1} \ . 
\end{align*}
Here 
\begin{equation*}
s = \frac{u_2 - u_3}{u_1 - u_3} 
\end{equation*}
and $K(s)$ and $E(s)$ are complete elliptic integrals of the first and
second kind.  

$K(s)$ and $E(s)$ have some well-known properties \cite{Tian1, Tian2}. They have the 
expansions
\begin{eqnarray}
K(s) & = & \frac{\pi}{2} [1 + \frac{s}{4} + \frac{9}{64} s^{2}
+ \cdots + (\frac{1 \cdot 3 \cdots (2n-1)}{2 \cdot 4 \cdots 2n})^{2} s^{n}
+ \cdots] \ , \label{K}\\
E(s) & = & \frac{\pi}{2} [1 - \frac{s}{4} - \frac{3}{64} s^{2}
 - \cdots - \frac{1}{2n-1}(\frac{1 \cdot 3 \cdots (2n-1)}{
2 \cdot 4 \cdots 2n})^{2} s^{n} -
\cdots] \ , \label{E}
\end{eqnarray}
for $|s| < 1$. They also have the asymptotics
\begin{eqnarray}
K(s) & \approx & \frac{1}{2} \log \frac{16}{1 - s}  \label{K2} \ , \\
E(s) & \approx & 1 + \frac{1}{4}(1 - s)[\log \frac{16}{1 - s} - 1] \ , \label{E2} 
\end{eqnarray}
as $s$ is close to $1$. Furthermore,
\begin{eqnarray}
\frac{d K(s)}{d s} & = & \frac{E(s) - (1-s)K(s)}{2s(1-s)} \ , \label{K3} \\
\frac{d E(s)}{d s} & = & \frac{E(s) - K(s)}{2s} \ . \label{E3} 
\end{eqnarray}
It immediately follows from (\ref{K}) and (\ref{E}) that
\begin{equation}
\frac{1}{1 - \frac{s}{2}} < \frac{K(s)}{E(s)} < \frac{1-\frac{s}{2}}{1-s}
\hspace*{.5in} for ~ 0 < s < 1 \ . \label{KE}
\end{equation}

The eigenspeeds of the Whitham equations (\ref{KdVW}) are defined in terms of
$P_0$ and $P_1$ of (\ref{P01}), 
\begin{equation*}  
\lambda_i(u_1, u_2, u_3) = 12 \frac{
  P_1(u_i, u_1, u_2, u_3) }{P_0(u_i, u_1, u_2, u_3)} \ , \quad i=1,2,3 \ ,
\end{equation*}
which give 
\begin{align}
\lambda_1(u_1, u_2, u_3) &= 2 (u_1 + u_2 + u_3) + 4(u_1 - u_2)
\frac{K(s)}{E(s)} \ , \nonumber \\
\lambda_2(u_1, u_2, u_3) &= 2(u_1 + u_2 + u_3) + 4 (u_2 - u_1)
\frac{ sK(s) }{E(s) - (1-s) K(s)} \ , \label{lambda} \\
\lambda_3(u_1, u_2, u_3) &= 2(u_1 + u_2 + u_3) + 4(u_2-u_3)
\frac{K(s)}{E(s) - K(s)} \ . \nonumber 
\end{align}

In view of (\ref{K}-\ref{E2}), we find that $\lambda_{1}$,
$\lambda_{2}$ and
$\lambda_{3}$ have behavior:

(1) At $u_{2}$ = $u_{3}$:
\begin{equation}
\label{tr}
\begin{array}{ll}
\lambda_{1}(u_1, u_2, u_3) = 6 u_{1} \ , \\
\lambda_{2}(u_1, u_2, u_3) =
\lambda_{3}(u_1, u_2, u_3) = 12 u_{3} - 6 u_{1} \ .
\end{array}
\end{equation}

(2) At $u_{1}$ = $u_{2}$:
\begin{equation}
\label{le}
\begin{array}{ll}
\lambda_{1}(u_1, u_2, u_3) =
\lambda_{2}(u_1, u_2, u_3) = 4 u_{1} + 2 u_{3} \ , \\
\lambda_{3}(u_1, u_2, u_3) = 6 u_{3} \ .
\end{array}
\end{equation}

The eigenspeeds of the Whitham equations (\ref{mKdVW})
are 
\begin{equation} \label{eq18a}
\mu_i^{(m)}(u_1, u_2, u_3) = 4^m (2m + 1) \frac{                                                                                                         
  P_m(u_i, u_1, u_2, u_3) }{P_0(u_i, u_1, u_2, u_3)} \ , \quad i=1,2,3 \ .                                                                                       
\end{equation}

The polynomial $4^m (2m + 1) P_m(\xi, u_1,u_2,u_3)$ can be expressed as \cite{gra}
\begin{equation}
\label{P}
4^m (2m + 1) P_m(\xi,u_1,u_2,u_3) = 2 (\xi - u_1)(\xi - u_2)(\xi - u_3) \Phi(\xi,u_1,u_2,u_3) + Q(\xi,u_1,u_2,u_3) \ .
\end{equation}

The function $\Phi(\xi,\vec{u})$
satisfies the boundary value problem for
the Euler-Poisson-Darboux equations
\begin{eqnarray}
2(u_i - u_j) {\p^2 \Phi \over \p u_i \p u_j} &=& {\p \Phi \over \p u_i} -
{\p \Phi \over \p u_j} \ , \label{ph1} \\
2(\xi - u_i) {\p^2 \Phi \over \p \xi \p u_i} &=& {\p \Phi \over \p \xi} -
2{\p \Phi \over \p u_i} \ , \label{ph2} \\
\Phi(u, u, u, u) &=& {2 \over 3} {d^2 \over d u^2}
[ C_m u^m ]
\ , \label{ph3}
\end{eqnarray}
where
\begin{equation}
\label{Cm}
C_m = {2^{2m+1} \over \int_0^1 {t^m \over \sqrt{1 - t}} d t} = {2^m (2m+1)!! \over m!} \ .
\end{equation}
The function $Q(\xi,\vec{u})$ is a quadratic polynomial in $\xi$;
\begin{eqnarray}
\label{Q}
Q(\xi, u_1,u_2,u_3) &=& 2 (\xi - u_2)(\xi - u_3) {\p q(u_1,u_2,u_3) \over \p u_1}
+ 2 (\xi - u_1)(\xi - u_3) {\p q(u_1,u_2,u_3) \over \p u_2}  \\
&& +2 (\xi - u_1)(\xi - u_2) {\p q(u_1,u_2,u_3) \over \p u_3}
+ q(u_1, u_2, u_3) P_0(\xi, u_1, u_2, u_3) \no
\end{eqnarray}
and $q(\vec{u})$ is the solution of the boundary value problem for
another version of the Euler-Poisson-Darboux equations
\begin{eqnarray}
2(u_i - u_j) {\p^2 q \over \p u_i \p u_j} &=& {\p q \over \p u_i} -
{\p q \over \p u_j} \ , \quad \mbox{$i, j= 1, 2, 3$} \ , \label{q1} \\
q(u, u, u) &=& C_m u^m \ .\label{q2}
\end{eqnarray}

The solution of equations (\ref{ph1}-\ref{ph3}) and that of (\ref{q1}) and 
(\ref{q2}) can be solved explicitly \cite{gra}.
In particular, the solution of (\ref{q1}) and (\ref{q2}) is \cite{Tian1} 
\begin{equation}
\label{iq}
q(u_1, u_2, u_3) = {C_m \over 2 \sqrt{2} \pi} \int_{-1}^1 \int_{-1}^1 {({1 + \mu \over 2}{1 + \nu \over 2} u_1
+ {1 + \mu \over 2}{1 - \nu \over 2} u_2 + {1 -\mu \over 2} u_3)^m \over \sqrt{(1 - \mu)(1 - \nu^2)}} \ d \mu d \nu    
\ .
\end{equation}

The speeds $\mu_i^{(m)}$'s of (\ref{eq18a}) for $m>1$ are connected to $\mu_i^{(m)}$'s for $m=1$, which are also given 
by (\ref{lambda}). 
\begin{lem}
For $i= 1, 2, 3$,
\begin{equation}
\label{q}
\mu_i^{(m)}(u_1, u_2, u_3) = {1 \over 2} [ \lambda_i(u_1, u_2, u_3) - 2 (u_1 + u_2 + u_3)] \ {\p q(u_1, u_2, u_3) 
\over \p u_i} + q(u_1, u_2, u_3) \ .
\end{equation}
\end{lem}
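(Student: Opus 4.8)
The plan is to read off $\mu_i^{(m)}$ from its definition (\ref{eq18a}) by evaluating the factorization (\ref{P}) at the endpoint $\xi=u_i$. At that point the cubic $(\xi-u_1)(\xi-u_2)(\xi-u_3)$ vanishes, so the $\Phi$-term drops out and $4^m(2m+1)P_m(u_i,\vec u)=Q(u_i,\vec u)$. Hence, wherever the speeds are defined (i.e. where $P_0(u_i,\vec u)\neq0$),
$$\mu_i^{(m)}(u_1,u_2,u_3)=\frac{Q(u_i,\vec u)}{P_0(u_i,\vec u)}.$$

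Next I would simplify $Q(u_i,\vec u)$ using its explicit form (\ref{Q}). Of the three terms carrying first derivatives of $q$, the two that contain the factor $(\xi-u_i)$ vanish at $\xi=u_i$, leaving
$$Q(u_i,\vec u)=2\prod_{j\neq i}(u_i-u_j)\,\frac{\partial q}{\partial u_i}+q(\vec u)\,P_0(u_i,\vec u),$$
so that
$$\mu_i^{(m)}=\frac{2\prod_{j\neq i}(u_i-u_j)}{P_0(u_i,\vec u)}\,\frac{\partial q}{\partial u_i}+q.$$
Comparing with the claimed formula (\ref{q}), it then remains to show $2\prod_{j\neq i}(u_i-u_j)/P_0(u_i,\vec u)=\tfrac12[\lambda_i-2(u_1+u_2+u_3)]$. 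Since $\lambda_i=12P_1(u_i,\vec u)/P_0(u_i,\vec u)$, this is equivalent to $6P_1(u_i,\vec u)-(u_1+u_2+u_3)P_0(u_i,\vec u)=2\prod_{j\neq i}(u_i-u_j)$, which I would deduce by setting $\xi=u_i$ in the single polynomial identity
$$6P_1(\xi,\vec u)-(u_1+u_2+u_3)P_0(\xi,\vec u)=2\big[(\xi-u_1)(\xi-u_2)+(\xi-u_2)(\xi-u_3)+(\xi-u_1)(\xi-u_3)\big],$$
because on the right-hand side only the product of the two factors not involving $u_i$ survives at $\xi=u_i$.

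To prove this last identity one compares coefficients of $\xi^2$, $\xi$ and $1$ on the two sides, using the explicit forms (\ref{P01}) of $P_0$ and $P_1$ together with the formulas for $a_{0,1}$ and $a_{1,2}$. The $\xi^2$-coefficients match trivially, the $\xi$-coefficients are both $-4(u_1+u_2+u_3)$, and the constant terms agree precisely because $6a_{1,2}-(u_1+u_2+u_3)a_{0,1}=2(u_1u_2+u_1u_3+u_2u_3)$, which is merely a rearrangement of the definition of $a_{1,2}$ in (\ref{P01}). Substituting back then yields (\ref{q}). The computation is entirely mechanical; the only point that repays a moment's care is the constant term of the polynomial identity, since that is where the elliptic-integral quantity $a_{0,1}$ enters and must cancel against the corresponding term inside $a_{1,2}$ — but this is immediate from (\ref{P01}), so I anticipate no genuine obstacle.
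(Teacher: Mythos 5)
Your proposal is correct, and its first half is exactly the paper's argument: evaluate the decomposition (\ref{P}) at $\xi=u_i$, where the cubic kills the $\Phi$-term, and then read off from (\ref{Q}) that $\mu_i^{(m)}=\bigl[2\prod_{j\neq i}(u_i-u_j)/P_0(u_i,\vec u)\bigr]\,\partial q/\partial u_i+q$. Where you diverge is in identifying the prefactor $2\prod_{j\neq i}(u_i-u_j)/P_0(u_i,\vec u)$ with $\tfrac12[\lambda_i-2(u_1+u_2+u_3)]$: the paper gets this by specializing the very same formula to $m=1$, where $q=2(u_1+u_2+u_3)$ (so $\partial q/\partial u_i=2$) and $\mu_i^{(1)}=\lambda_i$, so the identification falls out with no coefficient computation at all; you instead prove the polynomial identity $6P_1(\xi,\vec u)-(u_1+u_2+u_3)P_0(\xi,\vec u)=2\sum_{j<k}(\xi-u_j)(\xi-u_k)$ directly from the explicit forms (\ref{P01}) of $P_0$, $P_1$ and the definitions of $a_{0,1}$, $a_{1,2}$, and then set $\xi=u_i$. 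Both are valid; your check of the constant term is right, since $6a_{1,2}-(u_1+u_2+u_3)a_{0,1}=2(u_1u_2+u_1u_3+u_2u_3)$ is indeed just the definition of $a_{1,2}$ rearranged, so the elliptic-integral quantity $a_{0,1}$ cancels as you say. Your route is more self-contained (it does not require knowing the $m=1$ value of $q$ or that $\lambda_i$ is the $m=1$ instance of (\ref{eq18a})), at the cost of an explicit coefficient verification; the paper's bootstrap is shorter but leans on those two facts about the $m=1$ case.
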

\begin{proof}
We use (\ref{eq18a}), (\ref{P}) and (\ref{Q}) to write
\begin{eqnarray}
\mu_1^{(m)}(u_1, u_2, u_3) &=& {Q(u_1, u_1, u_2, u_3) \over P_0(u_1, u_1, u_2, u_3)}   \no \\
&=& { 2(u_1 - u_2)(u_1 - u_3) \over P_0(u_1, u_1, u_2, u_3)} \ {\p q(u_1, u_2, u_3) \over \p u_i}+q(u_1, u_2, u_3) \ .
\label{mu1}
\end{eqnarray}
In particular, when $m=1$, since the corresponding $q = 2(u_1+u_2+u_3)$, we obtain
$$\lambda_1(u_1, u_2, u_3) = { 4(u_1 - u_2)(u_1 - u_3) \over P(u_1, u_1, u_2, u_3)} + 2(u_1+u_2+u_3) \ .$$
This together with (\ref{mu1}) proves formula (\ref{q}) for $i=1$. The cases for $i=2, 3$ can be shown in the same
way.
\end{proof}

\begin{lem} \cite{Tian3} 
\begin{enumerate}[1.]

\item
\begin{equation}
\label{a23}
{\p \mu_i^{(m)} \over \p u_j} = {{\p \lambda_i \over \p u_j} \over \lambda_i - \lambda_j} \ [\mu_i^{(m)} - \mu_j^{(m)}] \ ,       
\quad \quad i, j = 1, 2, 3; i \neq j \ .
\end{equation}

\item
\begin{equation}
\label{Pm0}
{\p \over \p u_i} \left( {P_m(\xi, u_1, u_2, u_3) \over \sqrt{(\xi - u_1)(\xi - u_2)(\xi - u_3)}} \right)
= {\mu_i^{(m)}(u_1, u_2, u_3) \over 4^m (2m +1)} {\p \over \p u_i} \left( {P_0(\xi, u_1, u_2, u_3) \over 
\sqrt{(\xi - u_1)(\xi - u_2)(\xi - u_3)}} \right)
\end{equation}
for $i=1,2,3$ and $\eta \neq u_1, u_2, u_3$.
\end{enumerate}

\end{lem}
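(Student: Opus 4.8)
The plan is to prove part~2 first and then read off part~1 from it. Throughout write $R=R(\xi,\vec u)=(\xi-u_1)(\xi-u_2)(\xi-u_3)$ and $\omega_n=P_n(\xi,\vec u)/\sqrt R$, so that the defining properties of $P_n$ become $\omega_n=\xi^{n-1/2}+\mathcal O(\xi^{-3/2})$ at infinity together with $\int_{u_3}^{u_2}\omega_n\,d\xi=0$. The starting point is the algebraic identity
\[
\p_{u_i}\omega_n=\frac{F_i^{(n)}(\xi,\vec u)}{2(\xi-u_i)\sqrt R},\qquad F_i^{(n)}:=2(\xi-u_i)\,\p_{u_i}P_n+P_n,
\]
which follows at once from $\p_{u_i}\sqrt R=-\sqrt R/[2(\xi-u_i)]$. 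Two elementary facts about $F_i^{(n)}$ do the work. First, putting $\xi=u_i$ annihilates the first term, so $F_i^{(n)}(u_i,\vec u)=P_n(u_i,\vec u)$. Second, $F_i^{(n)}$ has degree at most $1$ in $\xi$: differentiating $\omega_n=\xi^{n-1/2}+\mathcal O(\xi^{-3/2})$ in $u_i$ gives $\p_{u_i}\omega_n=\mathcal O(\xi^{-3/2})$, while $\p_{u_i}\omega_n=F_i^{(n)}(\xi,\vec u)\,\xi^{-5/2}(1+\mathcal O(\xi^{-1}))/2$, and for a polynomial $F_i^{(n)}$ these are compatible only if $\deg_\xi F_i^{(n)}\le 1$ (the vanishing of the coefficients of $\xi^2,\dots,\xi^n$ is just a repackaging of the intermediate normalization conditions in (\ref{eq16})).

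Next I would set $c_m:=\mu_i^{(m)}/[4^m(2m+1)]$, which by (\ref{eq18a}) equals $P_m(u_i,\vec u)/P_0(u_i,\vec u)$, and form $G:=F_i^{(m)}-c_mF_i^{(0)}$. By the two facts above $G$ has degree $\le 1$ in $\xi$ and $G(u_i,\vec u)=P_m(u_i,\vec u)-c_mP_0(u_i,\vec u)=0$, hence $G=\kappa(\vec u)\,(\xi-u_i)$ for a scalar function $\kappa$. Dividing the displayed identity by $2(\xi-u_i)\sqrt R$ and subtracting then gives
\[
\p_{u_i}\omega_m-c_m\,\p_{u_i}\omega_0=\frac{G}{2(\xi-u_i)\sqrt R}=\frac{\kappa(\vec u)}{2\sqrt R}.
\]
To finish I would integrate over $[u_3,u_2]$: differentiating $\int_{u_3}^{u_2}\omega_n\,d\xi=0$ in $u_i$ gives $\int_{u_3}^{u_2}\p_{u_i}\omega_n\,d\xi=0$ for all $n$, so $\tfrac{\kappa}{2}\int_{u_3}^{u_2}d\xi/\sqrt R=0$; since $\int_{u_3}^{u_2}d\xi/\sqrt R$ is the nonvanishing period of the holomorphic differential of $y^2=R(\xi)$ (a multiple of $K(s)$), we get $\kappa\equiv 0$, i.e.\ (\ref{Pm0}). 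The main obstacle is the justification of this last differentiation under the integral when $i\in\{2,3\}$, in which case $u_i$ is an endpoint of $[u_3,u_2]$ and the naive Leibniz rule throws off a divergent boundary term that must cancel the (likewise divergent) integral of $\p_{u_i}\omega_n$; the clean remedy is to replace $\int_{u_3}^{u_2}$ by $\tfrac12\oint_a$ over the fixed homology cycle $a$ on $y^2=R(\xi)$ encircling the cut $[u_3,u_2]$, which stays off the branch points under small deformations of $\vec u$, so that $\p_{u_i}\oint_a\omega_n=\oint_a\p_{u_i}\omega_n$ with no boundary term.

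Part~1 then drops out mechanically. Evaluating the polynomial identity $F_j^{(m)}=c'_mF_j^{(0)}$, with $c'_m:=\mu_j^{(m)}/[4^m(2m+1)]$, at $\xi=u_i$ for $i\ne j$ gives $2(u_i-u_j)\,\p_{u_j}P_m(u_i,\vec u)+P_m(u_i,\vec u)=c'_m[\,2(u_i-u_j)\,\p_{u_j}P_0(u_i,\vec u)+P_0(u_i,\vec u)\,]$. Writing $B=P_0(u_i,\vec u)$, $B'=\p_{u_j}P_0(u_i,\vec u)$ and $\delta=2(u_i-u_j)$, one solves this for $\p_{u_j}P_m(u_i,\vec u)$ and substitutes into $\p_{u_j}\mu_i^{(m)}=4^m(2m+1)\,\p_{u_j}\!\bigl(P_m(u_i,\vec u)/B\bigr)$ to get, after routine simplification,
\[
\p_{u_j}\mu_i^{(m)}=-\bigl(\mu_i^{(m)}-\mu_j^{(m)}\bigr)\,\frac{\delta B'+B}{\delta B}.
\]
The factor $(\delta B'+B)/(\delta B)$ does not involve $m$; taking $m=1$, where $\mu_i^{(1)}=\lambda_i$ and $\lambda_i\ne\lambda_j$ for $u_1>u_2>u_3$ by strict hyperbolicity of the KdV Whitham system, identifies it with $-\p_{u_j}\lambda_i/(\lambda_i-\lambda_j)$, and substituting back produces exactly (\ref{a23}). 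This step is entirely mechanical once part~2 is established.
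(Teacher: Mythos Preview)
The paper does not prove this lemma; it is simply quoted with a citation to \cite{Tian3}. Your argument is correct and is essentially the standard proof one would expect to find in that reference: rewrite $\p_{u_i}(P_n/\sqrt R)$ as $F_i^{(n)}/[2(\xi-u_i)\sqrt R]$, use the asymptotic normalization (\ref{eq16}) to force $\deg_\xi F_i^{(n)}\le 1$, and then use the period normalization (\ref{eq17}) together with the nonvanishing of the holomorphic $a$-period to kill the remaining scalar $\kappa$. Your handling of the only delicate point---differentiating $\int_{u_3}^{u_2}\omega_n\,d\xi=0$ when $u_i$ is an endpoint---via the fixed homology cycle $a$ is exactly the right fix. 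The deduction of part~1 from the polynomial identity $F_j^{(m)}=c'_mF_j^{(0)}$ by evaluating at $\xi=u_i$ and then calibrating the $m$-independent factor with the $m=1$ case is clean and correct; note that it uses $\lambda_i\neq\lambda_j$ for $u_1>u_2>u_3$, which the paper records just before (\ref{eq11}).
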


The following calculations are useful in the subsequent sections.

Using formula (\ref{q}) for $\mu_2$ and $\mu_3$ and formulae (\ref{lambda}) for
$\lambda_2$ and $\lambda_3$, we obtain
\begin{equation}
\label{M}
\mu_2^{(m)}(u_1, u_2, u_3) - \mu_3^{(m)}(u_1, u_2, u_3) = {2(u_2 - u_3) K \over (K-E)[E-(1-s)K]}
M(u_1, u_2, u_3) \ ,
\end{equation}
where
\begin{equation}
\label{M1}
M(u_1, u_2, u_3) = [{\p q \over \p u_3} + (1-s) {\p q \over \p u_2}] E -(1-s)({\p q \over \p u_2} +
{\p q \over \p u_3}) K \ . 
\end{equation}
We then use (\ref{K3}), (\ref{E3}) and (\ref{q1}) to calculate
\begin{eqnarray}
{\p M(u_1,u_2,u_3) \over \p u_2} &=& {1 \over 2} \ {p_1(u_1, u_2, u_3) \over u_1 - u_3} [E - K] \ , \label{pM} \\
{\p M(u_1,u_2,u_3) \over \p u_3} &=& {1 \over 2} \ {p_2(u_1, u_2, u_3) \over u_1 - u_3} [E - (1-s) K] + 
{3 \over 2} \ {M(u_1, u_2, u_3) \over u_1 - u_3} \ , \label{pM'}
\end{eqnarray}
where   
\begin{equation}
\label{poly}
p_1(u_1, u_2, u_3) = 2(u_1 - u_2) {\p \over \p u_2} \mbox{div}(q) - \mbox{div}(q) \ , ~~
p_2(u_1, u_2, u_3) = 2(u_1 - u_3) {\p \over \p u_3} \mbox{div}(q) - \mbox{div}( q) \ .
\end{equation}

We next consider
\begin{equation}
\label{F}
F(u_1, u_2,u_3) := {\mu_2(u_1, u_2, u_3)-\mu_3(u_1, u_2, u_3) \over u_2 - u_3} \ .
\end{equation}
Using formula (\ref{q}) for $\mu_2$ and $\mu_3$ and formulae (\ref{lambda}) for
$\lambda_2$ and $\lambda_3$, we obtain
\begin{eqnarray*}
F &=& - 2 {(1-s)K \over E - (1-s)K}
{\p q \over \p u_2} +
2 {K \over K - E} {\p q \over \p u_3} \\
&=& - 4 {s(1-s)K \over E - (1-s)K} (u_1 - u_3) {\p^2 q \over \p u_2 \p u_3}
+ 2 [ {K \over K - E} - {(1-s)K \over E - (1-s)K}] {\p q \over \p u_3} \ ,
\end{eqnarray*}
where we have used equations (\ref{q1}) in the last
equality. Finally, we use the expansions (\ref{K}-\ref{E}) for $K$ and $E$ to obtain
\begin{equation}
F(u_1, u_2,u_3) = -4 [ (2 - {7 \over 4} s + \cdots ) (u_1 - u_3)
{\p^2 q \over \p u_2 \p u_3} + (-{3 \over 4} + O(s^2)){\p q \over \p u_3}
] \ . \label{F2}
\end{equation}

\section{Self-similar Solutions}

In this section, we construct the self-similar solution of the Whitham equations
(\ref{mKdVW}) when $m \geq 2$ for the initial function (\ref{step}).
The $m=2$ result has already been obtained in \cite{PT}. Even in the $m=2$ case, the
key calculations presented here are different from those in \cite{PT}.

\begin{thm}(see Figure 1.)
\label{main3}
For the step-like initial data $u_0(x)$ of (\ref{step}), the solution of the Whitham equations
(\ref{mKdVW}) is given by
\begin{equation}
\label{ws1}
u_1 = 1 \ , \quad x = \mu_2^{(m)}(1, u_2, u_3) \ t \ , \quad x = \mu_3^{(m)}(1, u_2, u_3) \ t
\end{equation}
for $\alpha t < x \leq \beta t$ and by
\begin{equation}
\label{ws2}
u_1 = 1 \ , \quad x = \mu_2^{(m)}(1, u_2, 0) \ t \ , \quad u_3 = 0
\end{equation}
for $\beta t \leq x < \gamma t$, where $\alpha = \mu_2^{(m)}(1, u^*, u^*)$, $\beta = \mu_2^{(m)}(1, u^{**}, 0)$
and $\gamma = \mu_2^{(m)}(1, 1, 0) = q(1,1,0) = 4^m$. Here, $u^*$ is uniquely determined by the equation
\begin{equation}
\label{p1} p_1(1, u^*, u^*) = 0 \ , 
\end{equation}
and $u^{**}$ is uniquely given by the equation
\begin{equation}
\label{mid}
\mu_2^{(m)}(1, u^{**}, 0) - \mu_3^{(m)}(1, u^{**}, 0) = 0 \ .
\end{equation}
Outside the region $\alpha t < x < 4^m t$, the solution of the Burgers
type equation (\ref{mBurgers}) is given by
\begin{equation}
\label{bs1}  u \equiv 1 \quad \mbox{$x \leq \alpha t$}
\end{equation}
and
\begin{equation}
\label{bs2}  u \equiv 0 \quad \mbox{$x \geq 4^m t$} \ .
\end{equation}
\end{thm}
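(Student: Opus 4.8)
I would start from the self-similar reduction: seeking $u_i=u_i(\eta)$ with $\eta=x/t$, one has $u_{it}+\mu_i^{(m)}u_{ix}=t^{-1}u_i'(\eta)\,(\mu_i^{(m)}-\eta)$, so each Riemann invariant is either constant or satisfies $\mu_i^{(m)}(u_1,u_2,u_3)=\eta$. Thus (\ref{ws1}) and (\ref{ws2}) solve (\ref{mKdVW}) by construction, and the real content is: (i) that (\ref{p1}) and (\ref{mid}) have unique roots $u^{*},u^{**}\in(0,1)$ with $u^{*}<u^{**}$ and $\alpha<\beta<\gamma=4^{m}$; (ii) that (\ref{ws1}) and (\ref{ws2}) can be inverted to give $(u_2,u_3)$ as functions of $\eta$ taking values in $1>u_2>u_3>0$, respectively $1>u_2>u_3=0$, on the stated $\eta$-intervals; and (iii) that the four pieces match continuously across $x=\alpha t$, $x=\beta t$, $x=\gamma t$. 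I would dispose of region $3$ first, where $u_1\equiv1$, $u_3\equiv0$ and $x=\mu_2^{(m)}(1,u_2,0)\,t$ is smoothly invertible with $u_2$ increasing in $\eta$ once $\partial_{u_2}\mu_2^{(m)}(1,u_2,0)>0$ for $u^{**}<u_2<1$ --- a restricted genuine nonlinearity that survives although the global statement (\ref{eq11}) fails for $m>1$; the endpoint values $\mu_2^{(m)}(1,1,0)=q(1,1,0)=4^{m}$ (from (\ref{q}) and (\ref{iq})) and $\mu_2^{(m)}(1,u^{**},0)=\beta$ then pin down the $\eta$-range $(\beta,4^{m})$.

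Region $2$ is the heart of the matter. By (\ref{M}), whose prefactor $\frac{2(u_2-u_3)K}{(K-E)[E-(1-s)K]}$ is strictly positive for $1>u_2>u_3\ge0$ thanks to (\ref{KE}), the constraint $\mu_2^{(m)}(1,u_2,u_3)=\mu_3^{(m)}(1,u_2,u_3)$ built into (\ref{ws1}) is equivalent to $M(1,u_2,u_3)=0$. The diagonal $\{u_2=u_3\}$ already lies in $\{M=0\}$ (by (\ref{M1}) together with $K(0)=E(0)$), and near it $M(1,u_2,u_3)\sim-\frac{\pi\,p_1(1,u_3,u_3)}{16(1-u_3)^2}\,(u_2-u_3)^2$ by (\ref{pM}) and the expansions (\ref{K})--(\ref{E}); hence the second branch of $\{M=0\}$ can reach the diagonal only at a zero of $p_1(1,u,u)$, which is how $u^{*}$ enters via (\ref{p1}). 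The plan is then to show that $p_1(1,u,u)$ and $M(1,u,0)$ each change sign exactly once on $(0,1)$, giving unique $u^{*}$ and $u^{**}$, and that $\Gamma:=\{M(1,u_2,u_3)=0,\ u_2>u_3\}$ is a simple arc from $(u^{*},u^{*})$ to $(u^{**},0)$ contained in $\{0<u_3<u_2<1\}$ between its endpoints. Along $\Gamma$ the common value $\nu:=\mu_2^{(m)}=\mu_3^{(m)}$ must increase strictly from $\alpha=\mu_2^{(m)}(1,u^{*},u^{*})$ to $\beta=\mu_2^{(m)}(1,u^{**},0)$; the crucial simplification is that on the interior of $\Gamma$ one has $\partial_{u_3}\mu_2^{(m)}=\partial_{u_2}\mu_3^{(m)}=0$ by (\ref{a23}) (since $\mu_2^{(m)}-\mu_3^{(m)}=0$ and $\lambda_2>\lambda_3$ there), so parametrizing $\Gamma$ by $u_3$ gives $d\nu/du_3=\partial_{u_2}\mu_2^{(m)}\cdot(du_2/du_3)$ with $du_2/du_3=-M_{u_3}/M_{u_2}$ of the sign of $p_2/p_1$ by (\ref{pM})--(\ref{pM'}); the monotonicity of $\nu$ (and the inequality $u^{*}<u^{**}$) thus reduces to identifying the signs of the diagonal eigenspeed derivative $\partial_{u_2}\mu_2^{(m)}$ and of $p_2/p_1$ on $\Gamma$. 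Granting this, inverting (\ref{ws1}) along $\Gamma$ produces $(u_2(\eta),u_3(\eta))$ for $\alpha<\eta\le\beta$.

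Each of the three matchings reduces to a degeneration of the modulated genus-one wave. At $x=\alpha t$, $\Gamma$ reaches the harmonic edge $u_2=u_3=u^{*}$, where the oscillation collapses to the value $u_1=1$, matching (\ref{bs1}). At $x=\gamma t=4^{m}t$, region $3$ reaches the soliton edge $u_2=u_1=1$, $u_3=0$, whose weak limit is the background $u_3=0$, matching (\ref{bs2}); the speed is consistent since $\mu_2^{(m)}(1,1,0)=4^{m}$. At $x=\beta t$, region $2$ ends at $(u_2,u_3)=(u^{**},0)$ with $\mu_2^{(m)}=\mu_3^{(m)}=\beta$, while region $3$ begins at $u_2=u^{**}$, $u_3=0$ with $\mu_2^{(m)}(1,u^{**},0)=\beta$, so $u_1\equiv1$, $u_2\to u^{**}$ and $u_3\to0$ are all continuous across that ray --- and (\ref{mid}) is exactly what makes this consistent. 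The strict monotonicities invoked above yield $\alpha<\beta<4^{m}$, so the four regions of Figure~1 tile the $x$-$t$ plane without overlap or gap.

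The principal obstacle is precisely this sign-and-monotonicity analysis of $q$ and of the auxiliary quantities $\operatorname{div}(q)$, $p_1$, $p_2$, $M$ for general $m$: when $m=2$ the polynomial $q$ of (\ref{q}) is quadratic and the inequalities (\ref{INQ}) of \cite{PT} suffice, but for $m>2$ one must read off from the degree-$m$ double integral (\ref{iq}) that (a) $p_1(1,u,u)$ has a single sign change on $(0,1)$ --- the endpoint signs $p_1(1,1,1)=-mC_m<0$ and $p_1(1,0,0)>0$ come out of Beta-integral evaluations, but uniqueness of the zero is not routine; (b) $M(1,u,0)$ has a single sign change on $(0,1)$; and (c) $\Gamma$ is a monotone arc along which $\partial_{u_2}\mu_2^{(m)}$ and $p_2/p_1$ carry the signs that force $\nu$ to increase. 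What should make (a)--(c) manageable is that, by (\ref{iq}), each of $q$, $\operatorname{div}(q)$, $p_1$, $p_2$, $M$ is an integral over $[-1,1]^2$ of a power of the affine form $c_1u_1+c_2u_2+c_3u_3$ with $c_i\ge0$, $c_1+c_2+c_3=1$, against the positive weight $[(1-\mu)(1-\nu^2)]^{-1/2}$; differentiation in a single $u_i$ lowers the power and leaves a sign-definite integrand, so each claim collapses to a one-variable inequality for a marginal of that weight, which one then settles using the elementary estimates (\ref{K})--(\ref{KE}) and the edge asymptotics (\ref{K2})--(\ref{E2}).
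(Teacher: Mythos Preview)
Your overall architecture matches the paper's: the self-similar reduction, the equivalence $\mu_2^{(m)}=\mu_3^{(m)}\Leftrightarrow M=0$ via (\ref{M}), the identification of the trailing edge through $p_1(1,u,u)=0$, the vanishing $\partial_{u_3}\mu_2^{(m)}=\partial_{u_2}\mu_3^{(m)}=0$ on $\Gamma$ from (\ref{a23}), and the reduction of the monotonicity of $\nu$ along $\Gamma$ to the signs of $p_1$ and $p_2$ there via (\ref{pM})--(\ref{pM'}). Your quadratic expansion of $M$ near the diagonal is also correct.

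The gap is in your last paragraph. It is \emph{not} true that $p_1$, $p_2$, or $M$ is an integral of a power of a nonnegative affine form against the positive weight $[(1-\mu)(1-\nu^2)]^{-1/2}$: each is a \emph{difference} of two such integrals (e.g.\ $p_1=2(u_1-u_2)\,\partial_{u_2}\mathrm{div}(q)-\mathrm{div}(q)$), so the ``sign-definite integrand'' trick fails and uniqueness of the zero does not collapse to a marginal inequality. The paper proceeds differently. For (a) it writes $p_1(1,\xi,\xi)=C_m\,p(\xi)$ with $p(\xi)=(1-\xi)U_0'(\xi,1)-U_0(\xi,1)$ a polynomial of degree $m-1$, computes via Beta integrals that $p^{(k)}(0)>0$ and $p^{(k)}(1)<0$ for \emph{every} $0\le k\le m-2$ (not just $k=0$), and then runs a Rolle-type count: three zeros of $p$ would force three zeros of $p',\ldots,p^{(m-2)}$, impossible for a linear function. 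The same device, applied off the diagonal, gives the uniqueness of the $u_2$-zero of $p_1(1,\cdot,u_3)$ and of the $u_3$-zero of $p_2(1,u_2,\cdot)$ (Lemma~\ref{p12}). For (c) the paper does not attack $\partial_{u_2}\mu_2^{(m)}$ directly; it first shows $p_1<0$ on $\Gamma$ by proving the unique $u_2$-zero of $M(1,\cdot,u_3)$ lies to the right of that of $p_1(1,\cdot,u_3)$, and then gets $p_2>0$ on $\Gamma$ by contradiction: if $p_2$ first vanished at $(\bar u_2,\bar u_3)\in\Gamma$ then $B'(\bar u_3)=0$, so $\tfrac{d}{du_3}p_2(1,B(u_3),u_3)|_{\bar u_3}=\partial_{u_3}p_2<0$ by Lemma~\ref{p12}, forcing $p_2<0$ just \emph{above} $\bar u_3$, the wrong side. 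With $p_1<0$, $p_2>0$ on $\Gamma$ in hand, (\ref{u22}) and (\ref{dia}) give $\partial_{u_2}\mu_2^{(m)}>0$ on $\Gamma$ for free, and the region-3 inequality $\partial_{u_2}\mu_2^{(m)}(1,u_2,0)>0$ for $u_2>u^{**}$ comes from the separate identity (\ref{**}) linking it to $p_1(1,u_2,0)$.
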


\begin{figure}[h] \label{fig1}
\begin{center}
\resizebox{15cm}{5cm}{\includegraphics{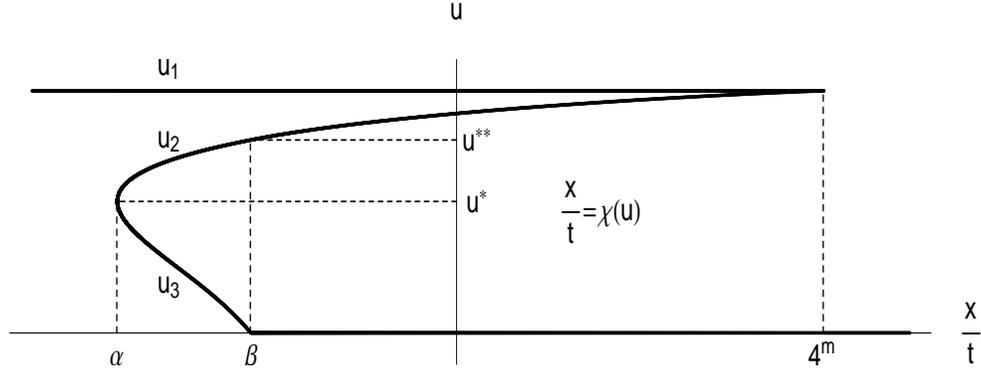}}
\caption{Self-Similar solution of the Whitham equations for $m \geq 2$. The curve defines the function $\chi(u)$.}
\end{center}
\end{figure}

The boundaries $x = \alpha t$ and $x = 4^m t$ are called the trailing and leading edges, respectively.
They separate the solutions of the Whitham equations and Burgers type equations.
The Whitham solution matches the Burgers type solution in the following fashion (see Figure 1.):
\begin{eqnarray}
\label{tr1}
u_1 &=& \mbox{the Burgers type solution defined outside the region} \ , \\
\label{tr2}
u_2 &=& u_3 \ ,
\end{eqnarray}
at the trailing edge;
\begin{eqnarray}
\label{le1}
u_1 &=& u_2 \ , \\
\label{le2}
u_3 &=& \mbox{the Burgers type solution defined outside the region} \ ,
\end{eqnarray}
at the leading edge.

The proof of Theorem \ref{main3} is based on a series of lemmas.

We first show that the solution defined by either formulae (\ref{ws1}) or (\ref{ws2})
indeed satisfies the Whitham equations (\ref{mKdVW}) \cite{dub, PT, tsa}.

\begin{lem}
\label{tsa}
\begin{enumerate}
\item The functions $u_1$, $u_2$ and $u_3$ determined by equations (\ref{ws1})
give a solution of the Whitham equations (\ref{mKdVW}) as long as $u_2$ and $u_3$
can be solved from (\ref{ws1}) as functions of $x$ and $t$.

\item The functions $u_1$, $u_2$ and $u_3$ determined by equations (\ref{ws2})
give a solution of the Whitham equations (\ref{mKdVW}) as long as $u_2$
can be solved from (\ref{ws2}) as a function of $x$ and $t$.
\end{enumerate}

\end{lem}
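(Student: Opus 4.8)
The plan is to verify directly that the implicitly-defined functions satisfy \eqref{mKdVW} by differentiating the defining relations and using the structural identity \eqref{a23} from Lemma 2.3. I would treat part (1) in detail; part (2) is the degenerate limit $u_3 \equiv 0$ of the same argument.

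\textbf{Part (1).} Suppose $u_2 = u_2(x,t)$, $u_3 = u_3(x,t)$ are obtained by solving the two equations $x = \mu_2^{(m)}(1,u_2,u_3)\,t$ and $x = \mu_3^{(m)}(1,u_2,u_3)\,t$, with $u_1 \equiv 1$. Since $u_1$ is constant, the first Whitham equation $u_{1t} + \mu_1^{(m)} u_{1x} = 0$ is automatic. For $i=2,3$, I would differentiate the relation $\mu_i^{(m)}(1,u_2,u_3)\,t = x$ with respect to $t$ and with respect to $x$. Writing $\mu_i$ for $\mu_i^{(m)}(1,u_2,u_3)$, differentiation in $t$ gives $\mu_i + t\big(\mu_{i,2} u_{2t} + \mu_{i,3} u_{3t}\big) = 0$, i.e. $\mu_{i,2} u_{2t} + \mu_{i,3} u_{3t} = -\mu_i/t$, where $\mu_{i,j} := \partial \mu_i^{(m)}/\partial u_j$. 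Differentiation in $x$ gives $\mu_{i,2} u_{2x} + \mu_{i,3} u_{3x} = 1/t$. Combining, the vector $(u_{2t} + \mu_i u_{2x},\, u_{3t} + \mu_i u_{3x})$ — call its components $v_2, v_3$ — satisfies $\mu_{i,2} v_2 + \mu_{i,3} v_3 = 0$ for \emph{this particular} $i$. Using this once with $i=2$ and once with $i=3$ yields two linear equations; the point is to show $v_2 = u_{2t} + \mu_2 u_{2x} = 0$ and $v_3 = u_{3t} + \mu_3 u_{3x} = 0$, which is exactly \eqref{mKdVW}.

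\textbf{Key step.} To close the argument I would bring in identity \eqref{a23}, which with $u_1=1$ fixed says $\mu_{2,3} = \dfrac{\partial_3 \lambda_2}{\lambda_2 - \lambda_3}(\mu_2 - \mu_3)$ and $\mu_{3,2} = \dfrac{\partial_2 \lambda_3}{\lambda_3 - \lambda_2}(\mu_3 - \mu_2)$. From $i=2$: $\mu_{2,2}\,v_2 + \mu_{2,3}\,v_3 = 0$, where I also need the analogous expression for $v_3$ from $i=2$'s own equation — more precisely, subtracting the $i=2$ and $i=3$ versions of the $t$- and $x$-differentiated relations. Concretely: from the $i=2$ pair, $\mu_{2,2}(u_{2t}+\mu_2 u_{2x}) + \mu_{2,3}(u_{3t}+\mu_2 u_{3x}) = 0$; from the $i=3$ pair, $\mu_{3,2}(u_{2t}+\mu_3 u_{2x}) + \mu_{3,3}(u_{3t}+\mu_3 u_{3x}) = 0$. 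This is a $2\times 2$ linear system for the four quantities, but rewriting in terms of $v_2^{(2)} = u_{2t}+\mu_2 u_{2x}$, $v_3^{(3)} = u_{3t}+\mu_3 u_{3x}$ and the ``cross'' terms, and eliminating using $(\mu_2-\mu_3)u_{jx}$ relations, reduces everything to showing the $2\times 2$ coefficient matrix is nonsingular, equivalently $\mu_{2,2}\mu_{3,3} - \mu_{2,3}\mu_{3,2} \neq 0$ wherever the implicit functions are defined (this is precisely the invertibility hypothesis of the lemma, via the inverse function theorem applied to the map $(u_2,u_3)\mapsto(\mu_2,\mu_3)$). I expect the hypothesis ``$u_2,u_3$ can be solved as functions of $x,t$'' to be exactly what guarantees this Jacobian is nonzero, so the linear algebra forces $v_2 = v_3 = 0$.

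\textbf{Part (2) and the main obstacle.} For \eqref{ws2}, $u_1\equiv 1$ and $u_3\equiv 0$ are constant, so the first and third Whitham equations hold trivially \emph{provided} $\mu_1^{(m)}$ and $\mu_3^{(m)}$ are finite there — which they are since $P_0$ does not vanish at $u_i$ in the relevant range — and one only needs $u_{2t} + \mu_2^{(m)}(1,u_2,0) u_{2x} = 0$; differentiating $x = \mu_2^{(m)}(1,u_2,0)t$ in $t$ and $x$ gives this immediately by the single-variable chain rule, requiring only $\partial_2 \mu_2^{(m)}(1,u_2,0) \neq 0$, which is the stated solvability hypothesis. The genuinely delicate point — and the one I would be most careful about — is part (1)'s bookkeeping: one must correctly separate the ``characteristic derivatives along speed $\mu_2$'' from those along $\mu_3$ and not conflate $u_{2t}+\mu_2 u_{2x}$ with $u_{2t}+\mu_3 u_{2x}$. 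The cleanest route is probably to write the two differentiated identities in matrix form $J\,(u_{2x},u_{3x})^\top = (1/t)(1,1)^\top$ and $J\,(u_{2t},u_{3t})^\top = -(1/t)(\mu_2,\mu_3)^\top$ with $J$ the Jacobian $\big(\mu_{i,j}\big)$, invert $J$ (legitimate by the solvability hypothesis), and read off $u_{2t} = -\mu_2 u_{2x}$, $u_{3t} = -\mu_3 u_{3x}$ directly, since $J^{-1}(\mu_2,\mu_3)^\top$ and $J^{-1}(1,1)^\top$ have the claimed proportionality component-wise. That makes identity \eqref{a23} unnecessary for this particular verification, though it reassures us the construction is consistent with the Whitham structure; the only real input is the Jacobian being invertible, i.e., the hypothesis of the lemma.
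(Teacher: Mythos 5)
There is a genuine gap, and it sits exactly at the step the paper's proof is built on. You never use the fact that the two defining relations of (\ref{ws1}) force the eigenspeeds to coalesce on the solution: $\mu_2^{(m)}(1,u_2,u_3)=\mu_3^{(m)}(1,u_2,u_3)=x/t$. The paper's proof combines this coalescence with identity (\ref{a23}) (and strict hyperbolicity of the $\lambda_i$) to conclude $\p\mu_2^{(m)}/\p u_3=\p\mu_3^{(m)}/\p u_2=0$ on the solution --- its equation (\ref{dia}) --- so each differentiated relation involves only one unknown, e.g. $1=\frac{\p\mu_2^{(m)}}{\p u_2}\,t\,u_{2x}$ and $0=\frac{\p\mu_2^{(m)}}{\p u_2}\,t\,u_{2t}+\mu_2^{(m)}$, and the Whitham equations drop out; note the needed nonvanishing of $\p\mu_2^{(m)}/\p u_2$ is automatic from the first of these. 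Your substitute --- that ``$J^{-1}(\mu_2,\mu_3)^\top$ and $J^{-1}(1,1)^\top$ have the claimed proportionality component-wise,'' so invertibility of the Jacobian alone forces $v_2=v_3=0$ and (\ref{a23}) is ``unnecessary'' --- is false for a general invertible $J$ when $\mu_2\neq\mu_3$: take $J=\pmtwo{1}{1}{0}{1}$, for which $J^{-1}(\mu_2,\mu_3)^\top=(\mu_2-\mu_3,\,\mu_3)^\top$ while $\mathrm{diag}(\mu_2,\mu_3)\,J^{-1}(1,1)^\top=(0,\,\mu_3)^\top$. Equivalently, your elimination really gives $J(v_2,v_3)^\top=-\left((\mu_2-\mu_3)\,\frac{\p\mu_2^{(m)}}{\p u_3}\,u_{3x},\ (\mu_3-\mu_2)\,\frac{\p\mu_3^{(m)}}{\p u_2}\,u_{2x}\right)^{\top}$, and nonsingularity of $J$ does not annihilate the right-hand side; you need it to vanish, which is precisely where the coalescence $\mu_2^{(m)}=\mu_3^{(m)}$, or its consequence via (\ref{a23}), must enter.

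The fix is immediate once stated: since both equations of (\ref{ws1}) equal $x/t$, your matrix form becomes $J(u_{2t},u_{3t})^\top=-\mu\,J(u_{2x},u_{3x})^\top$ with $\mu=\mu_2^{(m)}=\mu_3^{(m)}=x/t$, and then invertibility of $J$ does give $u_{it}+\mu_i^{(m)}u_{ix}=0$ for $i=2,3$. Even so, be aware that invertibility of the full Jacobian is not literally what the hypothesis ``$u_2,u_3$ can be solved as functions of $x,t$'' hands you (differentiating in $x$ only shows $(1,1)^\top$ lies in the range of $tJ$); the paper's route through (\ref{dia}) avoids needing it, since the decoupled equations carry their own nondegeneracy. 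Your treatment of part (2) is correct and matches the paper's.
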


\begin{proof}

(1) $u_1$ obviously satisfies the first equation of (\ref{mKdVW}). To verify the
second and third equations, we observe that
\begin{equation}
\label{dia}
\frac{\p \mu_2^{(m)} }{\p u_3} = \frac{\p \mu_3^{(m)} }{\p u_2} = 0
\end{equation}
on the solution of (\ref{ws1}). To see this, we use (\ref{a23}) to calculate
$$\frac{\p \mu_2^{(m)} }{\p u_3} = {{\p \lambda_2 \over \p u_3} \over \lambda_2 - \lambda_3}
\ (\mu_2^{(m)} - \mu_3^{(m)}) = 0 \ .$$
The second part of (\ref{dia}) can be shown in the same way.

We then calculate the partial derivatives of the second equation of (\ref{ws1})
with respect to $x$ and $t$.
$$ 1 = \frac{\p \mu_2^{(m)} }{\p u_2} \ t u_{2x} \ , \quad 0 = \frac{\p \mu_2^{(m)} }{\p u_2} 
\ t u_{2t} + \mu_2^{(m)} \ ,$$
which give the second equation of (\ref{mKdVW}).

The third equation of (\ref{mKdVW}) can be verified in the same way.

(2) The second part of Lemma 3.2 can easily be proved.

\end{proof}

We now determine the trailing edge. Eliminating $x$ and $t$ from the last two equations of (\ref{ws1})
yields
\begin{equation}
\label{m23}
\mu_2^{(m)}(1, u_2, u_3) - \mu_3^{(m)}(1, u_2, u_3) = 0 \ .
\end{equation}
Since it degenerates at $u_2 = u_3$, we replace (\ref{m23}) by
\begin{equation}
\label{F1}
F(1, u_2,u_3) := {\mu_2^{(m)}(1, u_2, u_3)-\mu_3^{(m)}(1, u_2, u_3) \over u_2 - u_3} = 0 \ .
\end{equation}
Here, the function $F$ is also defined in (\ref{F}).

Therefore, at the trailing edge where $u_2=u_3$, i.e., $s=0$, equation
(\ref{F1}), in view of the expansion (\ref{F2}), becomes
$$8(1 - u_2){\p^2 q(1, u_2, u_2) \over \p u_2 \p u_3} - 3 {\p q(1, u_2, u_2) \over \p u_3} = 0 \ .$$
Since ${\p q \over \p u_2}= {\p q \over \p u_3}$ and ${\p^2 q \over \p u_2^2}= 
3{\p^2 q \over \p u_2 \p u_3}$
on $u_2=u_3$ because of (\ref{q1}), this equation is exactly equation (\ref{u*}). 

\begin{lem}
\label{u*}
Equation $p_1(1, \xi, \xi) = 0$ has a simple zero, denoted by $u^*$, in the region $0 < \xi < 1$, counting multiplicities.
Furthermore, $p_1(1, \xi, \xi)$ is positive when $\xi < u^*$ and negative when $\xi > u^*$.
\end{lem}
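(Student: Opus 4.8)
The plan is to compute $p_1(1,\xi,\xi)$ as an explicit function of $\xi$ using the integral representation (\ref{iq}) for $q$, and then analyze its sign directly. First I would evaluate $\mathrm{div}(q) = \p q/\p u_1 + \p q/\p u_2 + \p q/\p u_3$ and $\p(\mathrm{div}(q))/\p u_2$ on the diagonal $u_2=u_3=\xi$, $u_1=1$. Differentiating (\ref{iq}) under the integral sign is routine: each derivative brings down a factor of $m$ times the appropriate convex-combination coefficient times the bracket to the power $m-1$. On the locus $u_2=u_3=\xi$, the argument of the $m$-th power collapses to a single-variable expression of the form $\tau\cdot 1 + (1-\tau)\xi$ for an affine function $\tau=\tau(\mu,\nu)$ of the integration variables, so after the substitution the double integral becomes (a combination of) one-dimensional moment integrals of powers of $\tau(1-\xi)+\xi$ against the weight $1/\sqrt{(1-\mu)(1-\nu^2)}$. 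Plugging into the definition $p_1 = 2(1-\xi)\p_{u_2}(\mathrm{div}\,q) - \mathrm{div}\,q$ from (\ref{poly}) yields $p_1(1,\xi,\xi)$ as an integral $\int\!\!\int g(\tau,\xi)\,d\mu\,d\nu$ with $g$ a low-degree polynomial in the two arguments.

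The key structural point I expect is that, after this reduction, $p_1(1,\xi,\xi)$ is itself a polynomial in $\xi$ of degree $m-1$ (or $m$), with $\xi=1$ a root: indeed at $u_1=u_2=u_3$ the prefactor $2(u_1-u_2)$ in (\ref{poly}) vanishes while $\mathrm{div}(q)$ stays finite, and one checks $\mathrm{div}(q)(1,1,1)$ from (\ref{q2}) gives $p_1(1,1,1) = -C_m m < 0$; more usefully, I would track the behavior at $\xi=0$ and $\xi=1$ to pin down sign at the endpoints. The cleanest route to "simple zero, counting multiplicities'' is to show $p_1(1,\xi,\xi)$ has all coefficients of one sign after the substitution $\xi = 1-\sigma$ or that its derivative in $\xi$ is strictly negative wherever $p_1=0$ — i.e. a monotonicity/convexity argument on $(0,1)$. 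Concretely: establish $p_1(1,0,0)>0$ directly from the integral (the $\p_{u_2}$ term dominates because $1-\xi=1$ is largest there), establish $p_1(1,1,1)<0$ as above, and then prove $\p p_1(1,\xi,\xi)/\p\xi < 0$ for $\xi\in(0,1)$, which forces exactly one zero, simple, with the stated sign change. The derivative computation is again a moment integral whose integrand I would show is pointwise negative.

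The main obstacle is controlling the sign of $\p p_1(1,\xi,\xi)/\p\xi$ (equivalently, showing the relevant moment integrand does not change sign) uniformly in $m$, since the integrand is a polynomial in $\xi$ of growing degree whose coefficients involve the elliptic-type moments $\int_0^1 t^k(1-t)^{-1/2}dt = \sqrt{\pi}\,\Gamma(k+1)/\Gamma(k+3/2)$ and the analogous $\nu$-moments; naive coefficient-by-coefficient bounds may not suffice. I would handle this by keeping the expression in integral form as long as possible — writing $\p p_1/\p\xi$ as $\int\!\!\int [\text{affine in }\tau]\cdot[\tau(1-\xi)+\xi]^{m-2}\,d\mu\,d\nu$ and arguing that the affine factor, after integrating out one variable, has a definite sign on the support — rather than expanding in powers of $\xi$. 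If a clean pointwise sign is unavailable, the fallback is to use the recursion in $m$ relating $q^{(m)}$ to $q^{(m-1)}$ (differentiation of (\ref{iq}) in a scaling parameter) to set up an induction, with the $m=2$ case from \cite{PT} as the base. Either way, once monotonicity on $(0,1)$ together with the endpoint signs is in hand, the conclusion — a unique simple zero $u^*$, with $p_1(1,\xi,\xi)>0$ for $\xi<u^*$ and $<0$ for $\xi>u^*$ — is immediate.
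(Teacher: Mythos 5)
Your opening reduction is the same as the paper's: using the symmetry of $q$ in (\ref{iq}) one gets $p_1(1,\xi,\xi)=C_m\,p(\xi)$ with $p(\xi)=(1-\xi)\,\p U_0(\xi,1)/\p\xi-U_0(\xi,1)$, a polynomial of degree $m-1$ (cf. (\ref{divq})--(\ref{U})), and your endpoint values $p(0)>0$, $p(1)=-m<0$ are correct. The gap is in your central mechanism. You propose to prove $\p p_1(1,\xi,\xi)/\p\xi<0$ on all of $(0,1)$ by exhibiting a pointwise-signed moment integrand, but this monotonicity is \emph{false} exactly in the regime $m>2$ that the lemma is about: $p'(\xi)=(1-\xi)U_0''-2U_0'$, and the moments (\ref{Uk}) give $p'(0)=\dfrac{4m(m-1)}{(2m-1)(2m-3)(2m-5)}>0$ for $m\ge 3$; more generally $p^{(k)}(0)>0$ for every $k\le m-2$, which is precisely inequality (\ref{p}) of the paper. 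So $p$ increases near $\xi=0$ before turning down, no pointwise sign of your integrand can hold, and the variant ``all coefficients of one sign after $\xi=1-\sigma$'' also fails, since $p^{(k)}(1)<0$ for $k\le m-2$ forces those coefficients to alternate in sign. (For $m=2$, $p$ is linear and your argument is trivially fine, but that case is already in \cite{PT}.) Your fallback, an induction in $m$ via a recursion for $q$, is left entirely unspecified, and your other alternative (``$p'<0$ at every zero of $p$'') would indeed suffice logically but you give no way to prove it, and it is not easier than the statement itself.

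For comparison, the paper's proof replaces monotonicity by a derivative cascade: it establishes the two-sided sign conditions $p^{(k)}(0)>0$ and $p^{(k)}(1)<0$ for \emph{all} $k=0,1,\dots,m-2$ (the conditions at $\xi=1$ are immediate from $p^{(k)}=(1-\xi)U_0^{(k+1)}-(k+1)U_0^{(k)}$ and the positivity of the derivatives of $U_0$; the conditions at $\xi=0$ come from the explicit moments (\ref{Uk})), and then argues as follows: each $p^{(k)}$ changes sign on $(0,1)$ and hence has an odd number of zeros there; if $p$ had more than one zero it would have at least three, and Rolle together with the endpoint signs at the next level would force $p^{(k+1)}$ to have at least three zeros for each $k$, ultimately making the degree-one polynomial $p^{(m-2)}$ have at least three zeros, a contradiction. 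This yields the unique simple zero $u^*$ and the sign pattern of Lemma \ref{u*} without any claim that $p$ itself is monotone. To repair your write-up you would need to abandon the global sign of $\p p_1/\p\xi$ and substitute an argument of this type (or prove your ``negative derivative at every zero'' claim, which again requires the full hierarchy of endpoint inequalities rather than a pointwise-signed integrand).
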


\begin{proof}
We first simplify the polynomial $p_1$ of (\ref{poly}).
In view of formula (\ref{iq}) for $q$, we use the fact that $q$ is symmetric in $u_1$, $u_2$ and $u_3$ to
obtain 
\begin{equation}
\label{divq}
\left( {\p q(u_1, u_2, u_3) \over \p u_1} + {\p q(u_1, u_2, u_3) \over \p u_2}
+ {\p q(u_1, u_2, u_3) \over \p u_3} \right)|_{u_2=u_3=\xi} = C_m U_0(\xi, u_1) \ ,
\end{equation}
where 
\begin{equation}
\label{fU}
U_0(\xi, u_1) = {1 \over 2 \sqrt{2}} \int_{-1}^{1} {m({1 + \mu \over 2}
\xi + {1- \mu \over 2} u_1) \over \sqrt{1 - \mu} }^{m-1} d \mu \ .
\end{equation}
We can then write $p_1$ as
\begin{equation}
\label{U}
p_1(1, \xi, \xi) = C_m [(1 - \xi) {\p U_0(\xi, 1) \over \p \xi} - U_0(\xi, 1)] \ .
\end{equation}

Denoting the function in the parenthesis of (\ref{U}) by $p(\xi)$, we claim that
\begin{equation}
\label{p}
{d^k p(0) \over d \xi^k} > 0 \ , ~~~~~~~~ {d^k p(1) \over d \xi^k} < 0
\end{equation}
for $k = 0, 1, 2, \cdots, m-2$.

Obviously,
$${d^k p(\xi) \over d \xi^k} = (1 - \xi) {d^{k+1} U_0(\xi,1) \over d \xi^{k+1}} -
(k+1) {d^k U_0(\xi,1) \over d \xi^k} \ .$$
Since, ${d^k U_0(\xi,1) \over d \xi^k}$ is a positive function, this proves the second
inequality of (\ref{p}).

To prove the first inequality of (\ref{p}),
we use formula (\ref{fU}) to calculate
\begin{equation*}
\dfrac{d^kU_0(\xi,1)}{d\xi^k}|_{\xi=0} = {m(m-1) \cdots (m-k) \over 2^{m + {1 \over 2}}}
\int_{-1}^1 (1 - \mu)^{m - k - {3 \over 2}} (1 + \mu)^k d \mu \ . 
\end{equation*}
The integral on the right can be evaluated using an iteration formula. Denote this integral
by $A_{m,k}$. An integration by parts gives $A_{m,k} = {2k \over 2m - 2k -1} \ A_{m,k-1}$.
Since $A_{m,0} = 2^{m+{1\over2}}/(2m-1)$, we thus obtain
$$A_{m,k} = {2^{m+k +{1 \over 2}} k! \over (2m-1)(2m-3) \cdots (2m-2k-1)} \ ,$$
which gives
\begin{equation}
\label{Uk}
\dfrac{d^kU_0(\xi,1)}{d\xi^k}|_{\xi=0} = {2^k k! m(m-1) \cdots (m-k) \over (2m-1)(2m-3) 
\cdots (2m-2k-1)} \ .
\end{equation}
Therefore
\[
\dfrac{d^kp(0)}{d\xi^k}= {d^{k+1} U_0(\xi,1) \over d \xi^{k+1}}|_{\xi =0} -
(k+1) {d^k U_0(\xi,1) \over d \xi^k}|_{\xi =0}
= {2^k (k+1)! m(m-1) \cdots (m-k) \over (2m-1)(2m-3) \cdots (2m-2k-3)} > 0 \ . 
\]

We now use (\ref{p}) to prove the existence and uniqueness of the zero of function
$p^{(k)}(\xi)$, $k=0, 1, \cdots, (m-2)$. First, it follows from (\ref{p}) that
$p^{(k)}(\xi)$ has an odd number of zeros in $0 < \xi < 1$,
counting multiplicities.
Second, if $p^{(k)}(\xi)$ has more than one zero, it must have at least three zeros.
Consequently, $p^{(k+1)}(\xi)$ will have more than one zero; so it must also have at least three zeros.
Repeating this argument, we see that $p^{(m-2)}$ must have at least three zeros.
This is an impossibility since
$p(\xi)$ is a polynomial of degree $m-1$ because $U_0(\xi, 1)$ is so.
Therefore, $p^{(k)}(\xi)$ has one and only one zero for $\xi \in (0, 1)$ when $k = 0, 1, \cdots, (m-2)$.
In particular, the $k=0$ case proves Lemma \ref{u*}.
\end{proof}

\begin{lem}
\label{trailing}
Equation (\ref{F1}) has a unique solution satisfying $u_2=u_3$. The solution
is $u_2=u_3=u^*$. The rest of equations (\ref{ws1}) at the trailing edge
are $u_1=1$ and
$x/t = \mu_2^{(m)}(1, u^*, u^*)$.
\end{lem}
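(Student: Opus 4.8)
The plan is to reduce equation (\ref{F1}), restricted to the diagonal $u_2 = u_3$ with $u_1 = 1$, to the scalar equation $p_1(1,\xi,\xi) = 0$ analyzed in Lemma \ref{u*}, and then quote that lemma. First I would set $\xi := u_2 = u_3$ and pass to the limit $s \to 0$ in the expansion (\ref{F2}); every $O(s)$ and higher correction drops out, leaving
$$F(1,\xi,\xi) \;=\; -\,8(1-\xi)\,{\p^2 q(1,\xi,\xi) \over \p u_2\, \p u_3} \;+\; 3\,{\p q(1,\xi,\xi) \over \p u_3}\, ,$$
which is precisely the combination already extracted in the discussion preceding the lemma.

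Next I would show this equals $-\,p_1(1,\xi,\xi)$. On $u_2 = u_3$ the Euler--Poisson--Darboux relations (\ref{q1}) give ${\p q \over \p u_2} = {\p q \over \p u_3}$ and, after differentiating (\ref{q1}) in $u_2$ and restricting, ${\p^2 q \over \p u_2^2} = 3\,{\p^2 q \over \p u_2\, \p u_3}$; when moreover $u_1 = 1$ they also give $2(1-\xi)\,{\p^2 q \over \p u_1\, \p u_2} = {\p q \over \p u_1} - {\p q \over \p u_3}$. Substituting these into $p_1(1,\xi,\xi) = 2(1-\xi)\,{\p \over \p u_2}\mbox{div}(q) - \mbox{div}(q)$ evaluated at $u_2 = u_3 = \xi$, with $\mbox{div}(q) = {\p q \over \p u_1} + {\p q \over \p u_2} + {\p q \over \p u_3}$, the terms involving ${\p q \over \p u_1}$ and ${\p^2 q \over \p u_1\, \p u_2}$ cancel and one obtains $p_1(1,\xi,\xi) = 8(1-\xi)\,{\p^2 q(1,\xi,\xi) \over \p u_2\, \p u_3} - 3\,{\p q(1,\xi,\xi) \over \p u_3} = -\,F(1,\xi,\xi)$. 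Hence $u_2 = u_3 = \xi$ solves (\ref{F1}) if and only if $p_1(1,\xi,\xi) = 0$; by Lemma \ref{u*} the only such $\xi \in (0,1)$ is $u^*$. This settles the first two assertions.

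For the remaining part of (\ref{ws1}) at the trailing edge, the first equation reads $u_1 = 1$, and since $\mu_2^{(m)} - \mu_3^{(m)} = (u_2 - u_3)\,F(1,u_2,u_3)$ vanishes when $u_2 = u_3$ — equivalently, by formula (\ref{q}) together with $\lambda_2 = \lambda_3$ at $u_2 = u_3$ from (\ref{tr}), $\mu_2^{(m)}(1,u^*,u^*) = \mu_3^{(m)}(1,u^*,u^*)$ — the second and third equations of (\ref{ws1}) both reduce to $x/t = \mu_2^{(m)}(1,u^*,u^*)$, as claimed.

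Essentially all the analytic content has been front-loaded: into Lemma \ref{u*} (existence, uniqueness and the sign of the root of the degree-$(m-1)$ polynomial $p_1(1,\xi,\xi)$) and into the expansion (\ref{F2}). Granting those, the present lemma is bookkeeping, and I do not anticipate a genuine obstacle. The one point that warrants a little care is carrying out the Euler--Poisson--Darboux reduction cleanly enough that $F(1,\xi,\xi)$ and $-\,p_1(1,\xi,\xi)$ agree exactly — not merely up to a positive constant — so that no solution of (\ref{F1}) is created or lost in the passage to the diagonal.
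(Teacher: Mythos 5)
Your proposal is correct and follows essentially the same route as the paper: the paper also specializes the expansion (\ref{F2}) to $s=0$ and uses the Euler--Poisson--Darboux identities ${\p q \over \p u_2}={\p q \over \p u_3}$, ${\p^2 q \over \p u_2^2}=3{\p^2 q \over \p u_2 \p u_3}$ on $u_2=u_3$ to reduce (\ref{F1}) to $p_1(1,\xi,\xi)=0$, then invokes Lemma \ref{u*}. Your explicit verification that $F(1,\xi,\xi)=-p_1(1,\xi,\xi)$ exactly (not just up to a positive factor) is a detail the paper only asserts, and it is carried out correctly.
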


Having located the trailing edge, we now solve equations (\ref{ws1}) in the
neighborhood of the trailing edge. We first consider equation (\ref{F1}).
We use (\ref{F2}) to differentiate $F$ at the trailing edge $u_1=1$, $u_2=u_3=u^*$
\begin{eqnarray}
{\p F(1, u^*, u^*) \over \p u_2} =
{\p F(1, u^*, u^*) \over \p u_3} &=& 10 {\p^2 q(1, u^*, u^*) \over \p u_2 \p u_3} -
8 (1 - u^*) {\p^3 q(1, u^*, u^*) \over \p u_2^2 \p u_3} > 0 \no \\
&=& - {C_m \over 2} {\p \over \p \xi} [(1-\xi) {\p U_0(\xi, 1) \over \p \xi} - U_0(\xi, 1)]_{\xi = u^*} > 0 
\label{u+}
\ ,
\end{eqnarray}
where in the second equality we have used (\ref{q1}), (\ref{divq}) and 
identities ${\p^2 q \over \p u_2^2}= 
{\p^2 q \over \p u_3^2} = 3{\p^2 q \over \p u_2 \p u_3}$ on $u_2=u_3$.
The inequality is a consequence of Lemma \ref{u*}.
  
Inequality (\ref{u+}) shows that equation (\ref{F1}) or equivalently (\ref{m23}) can be
inverted to give $u_2$ as a decreasing
function of $u_3$
\begin{equation}
\label{B} u_2 = B(u_3)
\end{equation}
in a neighborhood of $u_2=u_3= u^*$.

We will extend the solution (\ref{B}) of equation (\ref{m23}) by decreasing $u_3$ in the region
$0 < u_3 < u^* < u_2 < 1$ as far as possible.
We need to evaluate the derivatives
${\p (\mu_2^{(m)} - \mu_3^{(m)}) \over \p u_2}$ and ${\p (\mu_2^{(m)} - \mu_3^{(m)}) \over \p u_3}$ on the solution of (\ref{ws1}).
It follows from (\ref{M}), (\ref{pM}) and (\ref{pM'}) that  
\begin{eqnarray}
{\p [\mu_2^{(m)} - \mu_3^{(m)}] \over \p u_2}
&=& - {s K p_1(1, u_2, u_3) \over E - (1-s)K} \label{u22} \ ,\\
{\p [\mu_2^{(m)} - \mu_3^{(m)}] \over \p u_3}
&=& {s K p_2(1, u_2, u_3) \over K - E} \label{u33} \ ,
\end{eqnarray}
on the solution of (\ref{ws1}). 

We first study the two polynomials $p_1$ and $p_2$ of (\ref{poly}).

\begin{lem}
\label{p12}
For each $0 \leq u_3 < 1$, the polynomial $p_1(1, u_2, u_3)$, as a function of $u_2$, has only one zero
in the region $0 < u_2 < 1$, counting multiplicities. Furthermore, $p_1(1, u_2, u_3)$ is positive
when $u_2$ is on the left of this zero and negative when $u_2$ is on the right.

For each $0 \leq u_2 < 1$, the polynomial $p_2(1, u_2, u_3)$, as a function of $u_3$, has only one zero
in the region $0 < u_3 < 1$, counting multiplicities. Furthermore, $p_1(1, u_2, u_3)$ is positive
when $u_3$ is on the left of this zero and negative when $u_3$ is on the right.
\end{lem}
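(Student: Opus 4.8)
The plan is to follow the strategy of Lemma \ref{u*}, but now with $u_3\in[0,1)$ frozen and $p_1(1,u_2,u_3)$ viewed as a polynomial in the single variable $u_2$. First observe that the second assertion reduces to the first: since $q$ of (\ref{iq}) is symmetric, so is $\mbox{div}(q)=\p_{u_1}q+\p_{u_2}q+\p_{u_3}q$, and then a direct check from (\ref{poly}) yields the identity
\[
p_2(1,u_2,u_3)=p_1(1,u_3,u_2).
\]
Thus the statement about $p_2(1,u_2,u_3)$ as a function of $u_3$ (with $u_2\in[0,1)$ the frozen argument) is exactly the statement about $p_1$ with its last two arguments interchanged, and it is enough to prove the first assertion.

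Fix $u_3\in[0,1)$ and set $g(u_2):=\mbox{div}(q)(1,u_2,u_3)$. Writing $w_1,w_2,w_3$ for the three coefficients in the integrand of (\ref{iq}) (so $w_1+w_2+w_3=1$), formula (\ref{iq}) shows that $g$ is a polynomial in $u_2$ of degree $m-1$ and that, for $0\le k\le m-1$, $g^{(k)}(u_2)$ is a positive constant times the double integral of the nonnegative integrand $w_2^k(w_1+w_2u_2+w_3u_3)^{m-1-k}$ against the weight in (\ref{iq}); since this integrand is positive almost everywhere for $u_2,u_3\in[0,1]$, we have $g^{(k)}(u_2)>0$ on $[0,1]$. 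Put $\phi(u_2):=p_1(1,u_2,u_3)$. Then (\ref{poly}) reads $\phi=2(1-u_2)g'-g$, so by Leibniz's rule
\[
\phi^{(k)}(u_2)=2(1-u_2)g^{(k+1)}(u_2)-(2k+1)g^{(k)}(u_2),\qquad 0\le k\le m-1 .
\]
In particular $\phi$ has degree exactly $m-1$ (its leading coefficient is a negative multiple of the double integral of $w_2^{m-1}$, independent of $u_3$) and $\phi^{(k)}(1)=-(2k+1)g^{(k)}(1)<0$ for $0\le k\le m-1$. The only remaining ingredient is
\[
\phi^{(k)}(0)=2g^{(k+1)}(0)-(2k+1)g^{(k)}(0)>0,\qquad 0\le k\le m-2 .
\]
Granting this, the proof ends exactly as in Lemma \ref{u*}: the opposite signs $\phi^{(k)}(0)>0$, $\phi^{(k)}(1)<0$ force $\phi^{(k)}$ to have an odd number of zeros in $(0,1)$, counting multiplicities, for each $k\le m-2$; if $\phi=\phi^{(0)}$ had more than one zero it would have at least three, so by Rolle $\phi^{(1)}$ would have at least two and hence, by the parity just noted, at least three, and iterating this, $\phi^{(m-2)}$ would have at least three zeros, which is impossible for a polynomial of degree $1$. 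Therefore $\phi$ has a unique zero in $(0,1)$, and since $\phi(0)>0$ and $\phi(1)<0$ it is positive to the left of that zero and negative to the right. (When $m=2$, $\phi$ is linear and there is nothing to prove; the content is the case $m\ge3$.)

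I expect the main obstacle to be the inequality $\phi^{(k)}(0)>0$, holding uniformly for $u_3\in[0,1)$. This is genuinely harder than the corresponding step in Lemma \ref{u*}, where the analogue of $g$ depended on a single integration variable: here the frozen term $w_1+w_3u_3$ couples the two integration variables, so $g^{(k)}(0)$ no longer factors through one Euler integral. Expanding $(w_1+w_3u_3)^{j}$ binomially and evaluating the resulting mixed moments by the same Beta-integral/iteration device that produced $A_{m,k}$ in Lemma \ref{u*}, the inequality $2g^{(k+1)}(0)>(2k+1)g^{(k)}(0)$ reduces, after dividing out common positive factors, to an inequality between two polynomials in $u_3$ of the shape $n\sum_{l=0}^{n-1}c_l^{(n-1)}u_3^l>\sum_{l=0}^{n}c_l^{(n)}u_3^l$, where $n=m-1-k\ge1$ and $c_l^{(n)}=\binom{n}{l}\Gamma(n-l+\tfrac12)\Gamma(l+\tfrac12)>0$. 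One then verifies that the difference of the two sides is divisible by $1-u_3$ with a cofactor positive on $[0,1)$ --- for instance this cofactor equals $\tfrac{\pi}{4}(1+3u_3)$ when $m=3$, $k=0$, and $\tfrac{\pi}{2}$ when $m=2$. Once this polynomial estimate is in hand the lemma is proved, and, as noted, the statement for $p_2$ requires nothing further.
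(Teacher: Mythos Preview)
Your argument mirrors the paper's almost exactly: the reduction of $p_2$ to $p_1$ via the symmetry $p_2(1,u_2,u_3)=p_1(1,u_3,u_2)$, the Leibniz computation $\phi^{(k)}=2(1-u_2)g^{(k+1)}-(2k+1)g^{(k)}$, the sign $\phi^{(k)}(1)<0$, and the final Rolle cascade are all as in the paper's proof. The only point of divergence is the key inequality $\phi^{(k)}(0)>0$: the paper dispatches it by appeal to ``a formula similar to (\ref{Uk})'', while you reduce it to the polynomial inequality $nS_{n-1}(u_3)>S_n(u_3)$ with $S_n(u_3)=\sum_l c_l^{(n)}u_3^l$, and then leave that inequality essentially unproved beyond the cases $m=2,3$.

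This is a genuine (if small) gap, and it closes cleanly once you recognise the integral behind your $S_n$. Summing the Beta integrals defining the $c_l^{(n)}$ gives
\[
S_n(u_3)=n!\int_0^1\bigl(t+(1-t)u_3\bigr)^{n}\,\frac{dt}{\sqrt{t(1-t)}},
\]
whence
\[
nS_{n-1}(u_3)-S_n(u_3)=n!\,(1-u_3)\int_0^1\bigl(t+(1-t)u_3\bigr)^{n-1}\sqrt{\tfrac{1-t}{t}}\;dt>0\qquad(0\le u_3<1),
\]
which is exactly your ``divisible by $1-u_3$ with positive cofactor'' claim, proved for all $n\ge 1$. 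Equivalently---and this is presumably the ``formula similar to (\ref{Uk})'' the paper has in mind---exploit the symmetry of $q$ in (\ref{iq}) and place $u_2$ in the slot $w_3=\tfrac{1-\mu}{2}$ rather than $w_2$. Then $g^{(k)}(0)$ factors as the product of a $\mu$--Beta integral (the exact analogue of $A_{m,k}$, independent of $u_3$) and the $\nu$--integral $J_n(u_3)=\int_{-1}^1(\tfrac{1+\nu}{2}+\tfrac{1-\nu}{2}u_3)^n(1-\nu^2)^{-1/2}d\nu$; the desired inequality becomes simply $J_{n-1}>J_n$, immediate since the base lies in $[u_3,1)\subset[0,1)$. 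With either completion your proof is correct and coincides with the paper's.
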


\begin{proof}
We will prove the first part of the lemma; the second part follows from $p_2(1,u_2, u_3) = p_1(1, u_3, u_2)$.

The proof of the first part is similar to the proof of Lemma \ref{u*}. We will go through it 
briefly. 

We first have
$${\p^k p_1(1, 0, u_3) \over \p u_2^k} > 0 \ , \quad {\p^k p_1(1, 1, u_3) \over \p u_2^k} < 0 $$
for $k=0, 1, 2, \cdots, m-2$.    
The second inequality immediately follows from formula (\ref{poly}) for $p_1$. The first inequality is
derived from a formula similar to (\ref{Uk}).

The rest of the proof is the same as the proof of Lemma \ref{u*}.
\end{proof}

We now continue to extend the solution (\ref{B}) of equation (\ref{m23}) in the region
$0 < u_3 < u^* < u_2 < 1$ as far as possible. When $u_2$ and $u_3$ are close to $u^*$,
because of (\ref{u+}), we have ${\p (\mu_2^{(m)} - \mu_3^{(m)}) \over \p u_2} > 0$ and
${\p (\mu_2^{(m)} - \mu_3^{(m)}) \over \p u_3} >  0$ on the solution of (\ref{m23}). These along with
(\ref{u22}) and (\ref{u33}) show that
\begin{equation}
\label{inq}
p_1(1, u_2, u_3) < 0 \ , ~~ p_2(1, u_2, u_3) > 0
\end{equation}
when $u_2$ and $u_3$ are close to $u^*$.

\begin{lem}
\label{p12>}
Inequalities (\ref{inq}) hold on the solution of equation (\ref{m23})
as long as $0 < u_3 < u^* < u_2 < 1$. 
\end{lem}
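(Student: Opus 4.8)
The plan is to take the trailing-edge branch $u_2=B(u_3)$ of equation (\ref{m23}) supplied by (\ref{B}) and to propagate the signs (\ref{inq}) along it by a continuation argument as $u_3$ decreases. Two preliminary remarks organize everything. First, since $sK>0$ and, for $0<s<1$, both $K-E>0$ and $E-(1-s)K>0$ (from (\ref{KE})), formulas (\ref{u22})-(\ref{u33}) show that on the solution of (\ref{m23}) the sign of $\p(\mu_2^{(m)}-\mu_3^{(m)})/\p u_2$ is that of $-p_1$ and the sign of $\p(\mu_2^{(m)}-\mu_3^{(m)})/\p u_3$ is that of $p_2$; implicit differentiation of $\mu_2^{(m)}=\mu_3^{(m)}$ then gives
\[
B'(u_3)=\frac{E-(1-s)K}{K-E}\cdot\frac{p_2(1,B(u_3),u_3)}{p_1(1,B(u_3),u_3)}\ ,
\]
so the sign of $B'$ is that of $p_2/p_1$; in particular, while (\ref{inq}) holds one has $B'<0$, i.e. $u_2$ increases past $u^*$ as $u_3$ decreases, and the implicit function theorem keeps $B$ defined and $C^1$ as long as $p_1\neq0$. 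Second, Lemma \ref{p12} provides, for each fixed $u_3$, the unique simple zero $g_1(u_3)$ of $u_2\mapsto p_1(1,u_2,u_3)$ with $p_1<0$ exactly to its right, and, for each fixed $u_2$, the unique simple zero $g_2(u_2)$ of $u_3\mapsto p_2(1,u_2,u_3)$ with $p_2>0$ exactly to its left; both $g_1$ and $g_2$ are $C^1$ near these zeros.

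Suppose (\ref{inq}) failed somewhere on the continuation, and let $\tau:=\inf\{u_3:B$ is defined on $(u_3,u^*)$ with (\ref{inq}) holding throughout$\}$; the set is nonempty by the discussion preceding the lemma, and if $\tau=0$ the lemma holds. Assume $\tau>0$. By continuity $p_1\le0\le p_2$ at $(B(\tau),\tau)$, and $B(\tau)>u^*$ since $B$ is increasing, so the continuation can stop at $\tau$ only through $B(\tau)=1$ — in which case (\ref{inq}) has been established for all $u_3\in(\tau,u^*)$, i.e. as long as $u_2<1$, which is the claim — or $p_1(1,B(\tau),\tau)=0$ or $p_2(1,B(\tau),\tau)=0$. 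If $p_1=0$ and $p_2>0$ at $(B(\tau),\tau)$, then $B'(u_3)\to-\infty$ as $u_3\downarrow\tau$, whereas $\phi(u_3):=B(u_3)-g_1(u_3)$ is continuous and nonnegative on $[\tau,u^*)$, strictly positive on $(\tau,u^*)$, and vanishes at $\tau$ (by Lemma \ref{p12}, since $p_1<0$ on $(\tau,u^*)$ and $p_1=0$ at $\tau$); but then $\phi'(u_3)\to-\infty$ would make $\phi$ strictly decreasing just to the right of $\tau$, contradicting $\phi(u_3)\to\phi(\tau)=0$. If instead $p_2=0$ and $p_1<0$ at $(B(\tau),\tau)$, then $B'(u_3)\to0$, and $\psi(u_3):=g_2(B(u_3))-u_3$ is continuous on $[\tau,u^*)$, strictly positive on $(\tau,u^*)$ (by the second part of Lemma \ref{p12}), and vanishes at $\tau$, yet $\psi'(u_3)\to g_2'(B(\tau))\cdot0-1=-1<0$ — the same contradiction. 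Hence $\tau=0$ or $B(\tau)=1$, which proves the lemma.

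The step I expect to be the real obstacle is the degenerate coincidence $p_1=p_2=0$ at $(B(\tau),\tau)$, the one case above where $B'$ is a $0/0$ limit and the argument breaks. Equivalently, and more structurally, everything would follow at once if $g_1$ were nondecreasing: since $g_1(u^*)=u^*$ by Lemma \ref{u*}, one would then have $g_1(u_3)\le u^*<B(u_3)$ and $g_1(B(u_3))\ge u^*>u_3$ at every point of the curve with $0<u_3<u^*<u_2<1$, which by Lemma \ref{p12} and the symmetry $p_2(1,u_2,u_3)=p_1(1,u_3,u_2)$ is exactly (\ref{inq}). So the crux is the monotonicity of $g_1$ — i.e. the sign of $\p p_1(1,u_2,u_3)/\p u_3$ at the zero $u_2=g_1(u_3)$, equivalently that the zero loci of $p_1$ and $p_2$ do not enter $\{0<u_3<u^*<u_2<1\}$ and so meet there only at $(u^*,u^*)$. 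It is in establishing this that the degree-$m$ complications the authors warn about bite, since $p_1,p_2$ are now polynomials of degree $m-1$ rather than linear; I would attack it through the integral representation (\ref{iq}) of $q$ (whence of $\mbox{div}(q)$ and of $p_1$) together with the Euler-Poisson-Darboux relations (\ref{q1}), in the spirit of the proofs of Lemmas \ref{u*} and \ref{p12}. Once this is in hand, either the pointwise monotonicity argument or the continuation scheme above completes the proof.
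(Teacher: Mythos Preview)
Your continuation scheme is well-organized, and your Case 2 ($p_2=0$, $p_1<0$) is essentially the paper's proof of the second inequality. Case 1 can also be made rigorous along the lines you sketch. But the gap you flag yourself---the degenerate coincidence $p_1=p_2=0$ at $(B(\tau),\tau)$---is genuine for your approach, and the route you propose to close it (monotonicity of the zero locus $g_1$, or equivalently that the zero sets of $p_1$ and $p_2$ meet only at $(u^*,u^*)$ in the relevant region) is harder than needed and is not what the paper does.

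The paper bypasses the degenerate case entirely by proving the first inequality $p_1<0$ \emph{directly}, for each $u_3$ separately, rather than by continuation. The key is to work with $M$ of (\ref{M1}): by (\ref{M}) the solution curve is $\{M(1,u_2,u_3)=0\}$, and by (\ref{pM}) the sign of $\partial M/\partial u_2$ is that of $-p_1$ (since $E-K<0$). For each fixed $u_3\in(0,u^*)$ one has $M(1,u_3,u_3)=0$, $M(1,1,u_3)>0$, and $p_1(1,u_3,u_3)>0$ by Lemma \ref{u*}; hence $u_2\mapsto M(1,u_2,u_3)$ starts at $0$, decreases, reaches its unique minimum at $u_2=g_1(u_3)$ (the single $u_2$-zero of $p_1$ supplied by Lemma \ref{p12}), then increases to a positive value. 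So $M$ has exactly one zero in $(u_3,1)$, and it lies strictly to the right of $g_1(u_3)$---precisely where $p_1<0$. This yields $p_1(1,B(u_3),u_3)<0$ for every $u_3\in(0,u^*)$ at once.

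With $p_1<0$ established, the degenerate case is impossible, and the paper then runs exactly your Case 2 contradiction to obtain $p_2>0$. The missing ingredient is therefore not a new structural fact about the zero loci of $p_1$ and $p_2$, but the observation that Lemma \ref{p12} already forces $u_2\mapsto M(1,u_2,u_3)$ to have a single interior critical point, which pins $B(u_3)$ to the right of $g_1(u_3)$ without any continuation at all.
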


\begin{proof}

We first prove the first inequality of (\ref{inq}). In view of (\ref{M}), the solution of equation 
(\ref{m23}) is also governed by an equivalent equation $M(1, u_2, u_3) = 0$. For each $0 < u_3 < u^*$,
we study the zero of the $u_2$-variable function $M(1, u_2, u_3)$ in the interval $u_3 < u_2 < 1$. 
We obtain from formula (\ref{M1}) for $M$ that $M(1, u_3, u_3) = 0$ and $M(1, 1, u_3) > 0$. 
$M(1, u_2, u_3)$ is a decreasing function of $u_2$ when $u_2$ is on the immediate right of $u_3$. To see this,
we note that $p_1(1, u_3, u_3) > 0$ for $u_3 < u^*$ according to Lemma \ref{u*}. This and 
(\ref{pM}) prove that $M$ is decreasing for $u_2$ on the immediate right of $u_3$ 
because $E - K < 0$ for $s > 0$;
so $M(1, u_2, u_3) < 0$ for such $u_2$. Therefore,
$M(1, u_2, u_3)$ has a $u_2$-zero in the interval $u_3 < u_2 < 1$ when $0 < u_3 < u^*$. Because of the
uniqueness of the $u_2$-zero of $p(1, u_2, u_3)$ according to Lemma \ref{p12}, we conclude that
$M(1, u_2, u_3)$ has only one zero and that this zero is on the right of the zero of $p_1$. Hence,
the zero of $M$ is exactly given by $u_2 = B(u_3)$ and $p_1$ is negative on the solution. This proves
the first inequality of (\ref{inq}).

We now prove the second inequality of (\ref{inq}) by contradiction.
Suppose it first fails at $\bar{u}_2$ and $\bar{u}_3$, where
$0 < \bar{u}_3 < u^* < \bar{u}_2 < 1$; i.e., 
\begin{equation}
p_2(1, B(u_3), u_3) >  0 \ \mbox{when $\bar{u}_3 < u_3 < u^*$} \ , \quad 
p_2(1, B(u_3), u_3) =0 \ \mbox{when $u_3 = \bar{u}_3$} \ . \label{2'} 
\end{equation}
It then follows from (\ref{u22}) and (\ref{u33}) that solution (\ref{B}) of equation  
(\ref{m23}) has a zero derivative at $u_3 = \bar{u}_3$; i.e.,
$$B'(\bar{u}_3) = 0 \ .$$
Hence, 
$${d \over d u_3} p_2(1, B(u_3), u_3) |_{u_3 = \bar{u}_3} = {\p \over \p u_2}p_2(1, B(\bar{u}_3),
\bar{u}_3) B'(\bar{u}_3) + {\p \over \p u_3}p_2(1, B(\bar{u}_3), \bar{u}_3) < 0 \ ,$$
where the first term vanishes because of $B'(\bar{u}_3) = 0$ and the second term is negative
according to Lemma \ref{p12}. In view of $p_2(1, B(\bar{u}_3), \bar{u}_3)=0$, this implies
that $p_2(1, B(u_3), u_3) < 0$ when $u_3$ is on the immediately right of $\bar{u}_3$. That
contradicts (\ref{2'}). This proves the second inequality of (\ref{inq}).

\end{proof}

It follows from (\ref{u22}), (\ref{u33}) and Lemma 3.6 that
\begin{equation}
\label{dia2}
{\p [\mu_2^{(m)} - \mu_3^{(m)}] \over \p u_2} > 0 \ , \quad {\p [\mu_2^{(m)} - \mu_3^{(m)}] \over \p u_3} >  0
\end{equation}
on the solution of (\ref{m23}).
Solution (\ref{B}) of equation (\ref{m23})
can then be extended as a decreasing function of $u_3$ as long as $0 < u_3 < u^* < u_2 < 1$.

There are two possibilities: (1) $u_2$ touches $1$ before or simultaneously
as $u_3$ reaches $0$ and (2) $u_3$ touches $0$ before $u_2$ reaches $1$.

It follows from (\ref{le}) and (\ref{q}) that
$$\mu_2^{(m)}(1,1,u_3) > \mu_3^{(m)}(1,1,u_3) \quad \mbox{for $0 \leq u_3 < 1$} \ .$$
This shows that (1) is impossible. Hence, $u_3$ will touch $0$ before $u_2$
reaches $1$. When this happens, equation (\ref{m23}) becomes
equation (\ref{mid}).

\begin{lem}
\label{u2*}
Equation (\ref{mid}) has a simple zero in the region $0 < u_2 < 1$, counting
multiplicities. Denoting the zero by $u^{**}$, then $\mu_2^{(m)}(1, u_2, 0)-\mu_3^{(m)}(1, u_2, 0)$ is positive
for $u_2 > u^{**}$ and negative for $u_2 < u^{**}$. Furthermore, $0 < u^{***} < u^{**} < 1$ where $u^{***}$
is the unique zero of $p_1(1, u_2, 0)$.
\end{lem}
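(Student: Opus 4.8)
The plan is to reduce the statement to a sign analysis of the function $M(1,u_2,0)$ of (\ref{M1}) and then read off all the conclusions from Lemmas~\ref{u*} and \ref{p12}. Setting $u_1=1$ and $u_3=0$ we have $s=u_2$, and formula (\ref{M}) becomes
$$\mu_2^{(m)}(1,u_2,0)-\mu_3^{(m)}(1,u_2,0)=\frac{2u_2\,K}{(K-E)\,[E-(1-s)K]}\,M(1,u_2,0)\ .$$
For $0<u_2<1$ the prefactor is smooth and strictly positive: $u_2>0$, $K>0$, $K-E>0$ by the expansions (\ref{K})--(\ref{E}), and $E-(1-s)K>0$ by the inequality (\ref{KE}). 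Hence $\mu_2^{(m)}(1,u_2,0)-\mu_3^{(m)}(1,u_2,0)$ has exactly the same zeros in $(0,1)$ as $M(1,u_2,0)$, with matching multiplicities and matching signs, so it suffices to study $M(1,u_2,0)$.

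Next I would compute the boundary behavior and monotonicity of $M(1,u_2,0)$. At $u_2=0$ we have $s=0$, $E=K=\pi/2$, and (\ref{M1}) gives $M(1,0,0)=0$. As $u_2\to 1^-$ we have $s\to 1$, $(1-s)K\to 0$ and $E\to 1$, so $M(1,u_2,0)\to \p q(1,1,0)/\p u_3$, which is strictly positive by differentiating the integral representation (\ref{iq}) under the integral sign (equivalently, using (\ref{le}) and (\ref{q}) directly, $\mu_2^{(m)}(1,1,0)-\mu_3^{(m)}(1,1,0)=2\,\p q(1,1,0)/\p u_3>0$). For the monotonicity, (\ref{pM}) with $u_1=1$, $u_3=0$ gives
$$\frac{\p M(1,u_2,0)}{\p u_2}=\tfrac12\,p_1(1,u_2,0)\,[E-K]\ .$$
Since $E-K<0$ for $0<u_2<1$, the sign of $\p M/\p u_2$ is opposite to that of $p_1(1,u_2,0)$. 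By Lemma~\ref{p12} (case $u_3=0$), $p_1(1,u_2,0)$ has a unique zero $u^{***}\in(0,1)$, positive on $(0,u^{***})$ and negative on $(u^{***},1)$; therefore $M(1,u_2,0)$ is strictly decreasing on $(0,u^{***})$ and strictly increasing on $(u^{***},1)$.

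Combining these facts finishes the proof. From $M(1,0,0)=0$ and the strict decrease on $(0,u^{***})$ we get $M(1,u_2,0)<0$ for $0<u_2\le u^{***}$; from the strict increase on $(u^{***},1)$, starting at the negative value $M(1,u^{***},0)$ and tending to $\p q(1,1,0)/\p u_3>0$ as $u_2\to 1^-$, the function $M(1,u_2,0)$ has exactly one zero $u^{**}$ in $(0,1)$, it lies in $(u^{***},1)$, and $M$ is negative for $u_2<u^{**}$ and positive for $u_2>u^{**}$. Since $u^{**}>u^{***}$ we have $p_1(1,u^{**},0)<0$ and $E-K<0$ at $u^{**}$, so $\p M(1,u^{**},0)/\p u_2>0$ and the zero is simple. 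Dividing back through by the positive prefactor of the first display yields all assertions of the lemma about $\mu_2^{(m)}(1,u_2,0)-\mu_3^{(m)}(1,u_2,0)$, together with $0<u^{***}<u^{**}<1$.

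The main technical care goes to the two degenerate ends of the interval: at $u_2=1$ one must pass to the limit $s\to 1$ carefully, since $K$ diverges logarithmically, which is handled by the asymptotics (\ref{K2})--(\ref{E2}) or, more cleanly, by evaluating $\mu_2^{(m)}-\mu_3^{(m)}$ at $u_1=u_2$ through (\ref{le}) and (\ref{q}); and at $u_2=0$ the function $M$ has both value and first derivative vanishing, so the conclusion ``$M<0$ just to the right of $0$'' must be drawn from the strict monotonicity on the open interval rather than from an endpoint sign check. Beyond that, the argument is a direct application of Lemmas~\ref{u*} and \ref{p12} together with the elementary inequalities for $K$ and $E$.
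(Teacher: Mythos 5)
Your proposal is correct and takes essentially the same route as the paper's proof: reduce via (\ref{M}) (with $K-E>0$ and $E-(1-s)K>0$) to a sign analysis of $M(1,u_2,0)$, use the derivative formula (\ref{pM}) together with Lemma \ref{p12} for the monotonicity, and combine with the endpoint facts $M(1,0,0)=0$ and $M(1,u_2,0)>0$ near $u_2=1$ to get the unique simple zero $u^{**}\in(u^{***},1)$. You simply spell out the details (including the degenerate endpoints) that the paper dismisses as easily proved.
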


\begin{proof}
We use (\ref{M}) and (\ref{pM}) to prove the lemma.
In equation (\ref{M}), $K-E$ and $E-(1-s)K$ are all positive for $0<s<1$ in view of (\ref{KE}).

Denoting the unique zero of $p_1(1, u_2, 0)$ by $u^{***}$, it then follows from Lemma \ref{p12} that
$p_1(1, u_2, 0) > 0$ when $0 < u_2 < u^{***}$ and $p_1(1, u_2, 0) < 0$ when $u^{***} < u_2 < 1$.
Since $M(1, u_2, 0)$ of (\ref{M1}) vanishes at $u_2=0$ and is positive at $u_2=1$ in view of (\ref{K}-\ref{E2}),
we conclude from the derivative (\ref{pM}) that $M(1, u_2, 0)$ has a simple zero in $0<u_2<1$. 
This zero is exactly
$u^{**}$ and the rest of the theorem can be proved easily.
\end{proof}

Having solved equation (\ref{m23}) for $u_2$ as a decreasing function of $u_3$
for $0 < u_3 < u^*$, we turn to equations (\ref{ws1}). Because of (\ref{dia}) and (\ref{dia2}),
the third equation of (\ref{ws1}) gives $u_3$ as a decreasing function of $x/t$ for
$\alpha \leq x/t \leq \beta$, where $\alpha = \mu_2^{(m)}(1, u^*, u^*)$ and 
$\beta = \mu_2^{(m)}(1, u^{**}, 0)$.
Consequently, $u_2$ is an increasing function of $x/t$ in the same interval.

\begin{lem}
\label{Sws1}
The last two equations of (\ref{ws1}) can be inverted to give $u_2$ and $u_3$ as
increasing and decreasing functions, respectively, of the self-similarity variable
$x/t$ in the interval $\alpha \leq x/t \leq \beta$.
\end{lem}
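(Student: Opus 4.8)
Here is how I would prove Lemma \ref{Sws1}.

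The plan is to parametrize the locus on which the last two equations of (\ref{ws1}) are compatible by the single unknown $u_3$, and to show that along it the common value of $\mu_2^{(m)}$ and $\mu_3^{(m)}$ is a strictly decreasing function of $u_3$. Eliminating $x$ and $t$ from the last two equations of (\ref{ws1}) produces the single relation (\ref{m23}), whose solution set in the range $0\le u_3\le u^*$ has just been identified as the graph $u_2=B(u_3)$ of (\ref{B}): a strictly decreasing function, continuous up to both ends, with $B(u^*)=u^*$ and $B(0)=u^{**}$ (continuity up to the ends following from the implicit function theorem via (\ref{u+}) near $u_3=u^*$ and via Lemma \ref{u2*}, where the zero $u^{**}$ is simple, near $u_3=0$). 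Along this graph put
\[
\sigma(u_3):=\mu_2^{(m)}(1,B(u_3),u_3)=\mu_3^{(m)}(1,B(u_3),u_3),
\]
the two expressions agreeing precisely because $(B(u_3),u_3)$ solves (\ref{m23}). Then the last two equations of (\ref{ws1}) are equivalent to the pair $x/t=\sigma(u_3)$, $u_2=B(u_3)$, and by construction $\sigma(u^*)=\mu_2^{(m)}(1,u^*,u^*)=\alpha$ while $\sigma(0)=\mu_2^{(m)}(1,u^{**},0)=\beta$.

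Next I differentiate $\sigma$ on the open interval $0<u_3<u^*$. Writing $\sigma(u_3)=\mu_3^{(m)}(1,B(u_3),u_3)$ and applying the chain rule,
\[
\sigma'(u_3)=\frac{\p\mu_3^{(m)}}{\p u_2}(1,B(u_3),u_3)\,B'(u_3)+\frac{\p\mu_3^{(m)}}{\p u_3}(1,B(u_3),u_3).
\]
On the graph of $B$ (where $u_2=B(u_3)>u^*>u_3$, so $\lambda_2\neq\lambda_3$) equation (\ref{dia}) gives $\frac{\p\mu_3^{(m)}}{\p u_2}=\frac{\p\mu_2^{(m)}}{\p u_3}=0$, so the first term vanishes and, using $\frac{\p\mu_2^{(m)}}{\p u_3}=0$ once more, the remaining term equals $-\frac{\p}{\p u_3}[\mu_2^{(m)}-\mu_3^{(m)}]$. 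Hence $\sigma'(u_3)=-\frac{\p}{\p u_3}[\mu_2^{(m)}-\mu_3^{(m)}]$ at $(1,B(u_3),u_3)$. By (\ref{dia2}) — equivalently by the explicit expression (\ref{u33}), in which $s>0$, $K>0$, $K-E>0$ by (\ref{KE}), and $p_2(1,B(u_3),u_3)>0$ by Lemma \ref{p12>} — this quantity is strictly positive for $0<u_3<u^*$, so $\sigma'(u_3)<0$ there. (Differentiating the other representation $\sigma(u_3)=\mu_2^{(m)}(1,B(u_3),u_3)$ instead yields $\sigma'(u_3)=\frac{\p\mu_2^{(m)}}{\p u_2}\,B'(u_3)$ after using (\ref{dia}), re-expressing the same inequality through $\frac{\p\mu_2^{(m)}}{\p u_2}=\frac{\p}{\p u_2}[\mu_2^{(m)}-\mu_3^{(m)}]>0$ and $B'<0$.)

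Since $\sigma$ is continuous on the closed interval $[0,u^*]$ — the eigenspeeds $\mu_i^{(m)}$ being continuous up to $u_2=u_3$ and $B$ continuous to both ends — and $\sigma'<0$ on its interior, $\sigma$ is a strictly decreasing homeomorphism of $[0,u^*]$ onto $[\sigma(u^*),\sigma(0)]=[\alpha,\beta]$. Inverting it gives $u_3$ as a continuous, strictly decreasing function of the self-similarity variable $x/t$ on $\alpha\le x/t\le\beta$; composing with (\ref{B}) and using that $B$ is itself strictly decreasing then shows $u_2=B(u_3)$ to be a strictly increasing function of $x/t$ on the same interval, which is the assertion of the lemma. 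I do not anticipate a real obstacle: with the facts about $p_1,p_2$ and about equation (\ref{m23}) in hand, the argument is essentially a chain-rule computation that turns the diagonal vanishing (\ref{dia}) and the sign (\ref{dia2}) into monotonicity of $\sigma$. The only points needing attention are that (\ref{dia}) and (\ref{dia2}) both refer to the same set, the graph of $B$, and the harmless degeneracy at the trailing endpoint $u_3=u^*$ — there $s=0$ and in fact $\sigma'$ vanishes, since $p_2(1,u^*,u^*)=p_1(1,u^*,u^*)=0$ — which does not affect strict monotonicity on the closed interval, that following from continuity together with $\sigma'<0$ on the open interval.
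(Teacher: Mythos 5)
Your proposal is correct and takes essentially the same route as the paper: the paper likewise inverts $x/t=\mu_3^{(m)}(1,B(u_3),u_3)$ along the curve $u_2=B(u_3)$ solving (\ref{m23}), using (\ref{dia}) to kill the cross term and (\ref{dia2}) for the sign, and then transfers the monotonicity to $u_2$ through the decreasing function $B$. Your function $\sigma$ merely makes explicit the chain-rule computation and endpoint continuity that the paper leaves implicit.
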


We now turn to equations (\ref{ws2}). 
We first use (\ref{K3}), (\ref{E3}) and (\ref{lambda}) to calculate
the derivative of $\mu_2^{(m)}$ of (\ref{q})
\begin{eqnarray}
{\p \mu_2^{(m)} \over \p u_2} &=&  {1 \over 2} [\lambda_2 - 2(1 + u_2 + u_3)] {\p^2 q \over
\p u_2^2} + {1 \over 2} {\p \lambda_2 \over \p u_2} {\p q \over \p u_2} \no \\
&=& {2(u_2 - 1) s K \over E - (1-s) K} \ {\p^2 q \over \p u_2^2} + [{2 s K \over E - (1-s) K} 
+ 1 - {E^2 - (1-s)K^2 \over (E - (1-s)K)^2}] {\p q \over \p u_2} \no \\ 
& > & - {2s K \over E - (1-s)K} \ [ (1 - u_2) {\p^2 q \over \p u_2^2} - 
{\p q \over \p u_2}] \ ,  \label{*} 
\end{eqnarray}
where in the inequality we have used $(E - (1-s)K)^2 > E^2 - (1-s)K^2$, which is a consequence
of (\ref{KE}).

The polynomial in the parenthesis of (\ref{*}) is connected to $p_1(1,u_2,0)$;
indeed, 
\begin{equation}
\label{**}
p_1(1,u_2,0) = {2m + 1 \over m} \ [(1 - u_2) {\p^2 q(1,u_2,0) \over \p u_2^2} - 
{\p q(1,u_2,0) \over \p u_2}] \ .
\end{equation}
This follows from the identity
\begin{equation}
\label{identity3.27} 
2 m {\p q(1,u_2,0) \over \p u_3}  = (1 - u_2) {\p q(1,u_2,0) \over \p u_2}
+ m q(1,u_2,0) \ .\end{equation}
To see this, taking the derivative of (\ref{identity3.27}) and using formula (\ref{poly}) 
for $p_1$ and
equations (\ref{q1}) for $q$ yield (\ref{**}).

To prove (\ref{identity3.27}), we use the integral formula (\ref{iq}) for $q$ to calculate
both sides of the identity. The left equals
\begin{equation*}
\frac{C_m m^2}{4^{m-1} 2^{3/2} \pi} \left[ \int_{-1}^1
  (1+\mu)^{m-1} (1-\mu)^{1/2} d\mu \right] \left[ \int_{-1}^1 \left(
  (1+\nu) + (1-\nu) u_2 \right)^{m-1} (1-\nu^2)^{-1/2} d\nu \right] \,.
\end{equation*}
The right is
\begin{equation*}
\frac{C_m m}{4^{m-1} 2^{5/2} \pi} \left[ \int_{-1}^1 (1+\mu)^m
 (1-\mu)^{-1/2} d\mu \right]  
 \left[ \int_{-1}^1 \left( (1+\nu) +
 (1-\nu) u_2 \right)^{m-1} (1-\nu^2)^{-1/2} d\nu \right]  \ .
\end{equation*}
Both sides are equal in view of an easy identity
$$ m \int_{-1}^1 (1+\mu)^{m-1} (1-\mu)^{1/2} d\mu = \frac{1}{2}
\int_{-1}^1 (1+\mu)^m (1-\mu)^{-1/2} d\mu \ . $$
We have therefore proved identity (\ref{identity3.27}).


By Lemma \ref{p12}, $p_1(1,u_2,0)$ is negative for $u_2 > u^{***}$, where $u^{***}$
is the unique zero of $p_1$. Since $u^{***} < u^{**}$ according to Lemma \ref{u2*},
we conclude from (\ref{*}) and (\ref{**}) that ${\p \mu_2^{(m)} \over \p u_2} > 0$
on the solution of (\ref{ws2}) when $u_2 > u^{**}$. 
Hence, the second equation of (\ref{ws2}) can be solved for $u_2$ as an increasing
function of $x/t$ as long as $u^{**} < u_2 < 1$. When $u_2$ reaches $1$, we have
$$x/t = \gamma = \mu_2^{(m)}(1, 1, 0) \ .$$
We have therefore proved the following result.

\begin{lem}
\label{Sws2}
The second equation of (\ref{ws2}) can be inverted to give $u_2$ as an increasing
function of $x/t$ in the interval $\beta \leq x/t \leq \gamma$.
\end{lem}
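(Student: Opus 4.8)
The plan is to establish Lemma~\ref{Sws2} by showing that $\p\mu_2^{(m)}/\p u_2 > 0$ holds everywhere along the curve $u_3 = 0$, $u^{**} \le u_2 \le 1$, so that the map $u_2 \mapsto \mu_2^{(m)}(1,u_2,0)$ is strictly increasing and hence invertible on that interval. The starting point is the derivative formula for $\mu_2^{(m)}$ that I would obtain by applying (\ref{K3}), (\ref{E3}) and the formula (\ref{lambda}) for $\lambda_2$ to the expression (\ref{q}) for $\mu_2^{(m)}$. After simplification this gives an identity of the schematic form
\begin{equation*}
\frac{\p \mu_2^{(m)}}{\p u_2} = \frac{2(u_2-1)sK}{E-(1-s)K}\,\frac{\p^2 q}{\p u_2^2} + \Bigl[\frac{2sK}{E-(1-s)K} + 1 - \frac{E^2-(1-s)K^2}{(E-(1-s)K)^2}\Bigr]\frac{\p q}{\p u_2},
\end{equation*}
valid on $u_3 = 0$. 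The key analytic input is the inequality $(E-(1-s)K)^2 > E^2-(1-s)K^2$, which follows from (\ref{KE}); using it to bound the bracketed coefficient of $\p q/\p u_2$ from below yields the lower bound
\begin{equation*}
\frac{\p \mu_2^{(m)}}{\p u_2} > -\frac{2sK}{E-(1-s)K}\Bigl[(1-u_2)\frac{\p^2 q}{\p u_2^2} - \frac{\p q}{\p u_2}\Bigr].
\end{equation*}

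Next I would identify the polynomial in the brackets with $p_1(1,u_2,0)$ up to a positive constant. This rests on the algebraic identity $2m\,\p q(1,u_2,0)/\p u_3 = (1-u_2)\,\p q(1,u_2,0)/\p u_2 + m\,q(1,u_2,0)$, which I would prove directly from the double-integral representation (\ref{iq}) for $q$: both sides, after differentiating $q$ and restricting to $u_1 = 1$, $u_3 = 0$, factor into a product of a $\mu$-integral and a common $\nu$-integral, and the two $\mu$-integrals are related by the elementary Beta-function identity $m\int_{-1}^1 (1+\mu)^{m-1}(1-\mu)^{1/2}\,d\mu = \tfrac12\int_{-1}^1(1+\mu)^m(1-\mu)^{-1/2}\,d\mu$ (integration by parts). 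Differentiating the identity and invoking (\ref{poly}) and (\ref{q1}) then delivers $p_1(1,u_2,0) = \tfrac{2m+1}{m}\bigl[(1-u_2)\,\p^2 q(1,u_2,0)/\p u_2^2 - \p q(1,u_2,0)/\p u_2\bigr]$.

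With these two ingredients in hand the conclusion is quick. By Lemma~\ref{p12}, $p_1(1,u_2,0)$ has a unique zero $u^{***}$ in $(0,1)$ and is negative for $u_2 > u^{***}$; by Lemma~\ref{u2*}, $u^{***} < u^{**}$. Hence for $u^{**} < u_2 < 1$ the right-hand side of the lower bound above is strictly positive (since $sK/(E-(1-s)K) > 0$ and $p_1(1,u_2,0) < 0$), so $\p\mu_2^{(m)}/\p u_2 > 0$ throughout. Therefore $\mu_2^{(m)}(1,u_2,0)$ increases strictly from $\beta = \mu_2^{(m)}(1,u^{**},0)$ to $\gamma = \mu_2^{(m)}(1,1,0) = q(1,1,0) = 4^m$ as $u_2$ runs from $u^{**}$ to $1$, which gives the desired inversion $u_2 = u_2(x/t)$ on $\beta \le x/t \le \gamma$.

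The main obstacle is not any single estimate but the bookkeeping in the derivative computation: carefully carrying the elliptic-integral factors through the differentiation of (\ref{q}), correctly using (\ref{K3})--(\ref{E3}) for $\p\lambda_2/\p u_2$, and confirming that the messy coefficient of $\p q/\p u_2$ is indeed bounded below by $2sK/(E-(1-s)K)$ via (\ref{KE}). The identity-verification step is routine once one commits to evaluating (\ref{iq}) at $u_1=1$, $u_3=0$ and separating variables, so I expect the derivative simplification to be where the care is needed; everything after the reduction to $p_1(1,u_2,0)$ is immediate from the previously established sign lemmas.
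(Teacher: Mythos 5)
Your proposal is correct and follows essentially the same route as the paper: the same derivative computation for $\p\mu_2^{(m)}/\p u_2$ via (\ref{q}), (\ref{lambda}), (\ref{K3})--(\ref{E3}), the same lower bound using $(E-(1-s)K)^2 > E^2-(1-s)K^2$ from (\ref{KE}), the same identification of the bracket with $\tfrac{m}{2m+1}\,p_1(1,u_2,0)$ through the identity $2m\,\p q/\p u_3 = (1-u_2)\,\p q/\p u_2 + m\,q$ proved from (\ref{iq}), and the same final appeal to Lemmas \ref{p12} and \ref{u2*}. No gaps.
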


We are ready to conclude the proof of Theorem 3.1.

The Burgers type solutions (\ref{bs1}) and (\ref{bs2}) are trivial.

According to Lemma \ref{Sws1}, the last two equations of (\ref{ws1}) determine $u_2$
and $u_3$ as functions of
$x/t$ in the region $\alpha \leq x/t \leq \beta$. By the first part of Lemma \ref{tsa}, the
resulting $u_1$, $u_2$ and $u_3$ satisfy the Whitham equations (\ref{mKdVW}).
Furthermore, the boundary conditions (\ref{tr1}) and (\ref{tr2}) are satisfied
at the trailing edge $x = \alpha \ t$.

Similarly, by Lemma \ref{Sws2}, the second equation of (\ref{ws2}) determines $u_2$
as a function of $x/t$ in the region $\beta \leq x/t \leq \gamma=4^m$. It then follows from
the second part of Lemma \ref{tsa} that $u_1$, $u_2$ and $u_3$ of (\ref{ws2}) satisfy
the Whitham equations (\ref{mKdVW}).
They also satisfy the boundary conditions (\ref{le1}) and (\ref{le2}) at the
leading edge $x = \gamma \ t$.

We have therefore completed the proof of Theorem \ref{main3}.


\section{The Minimization Problem}

The zero dispersion limit of the solution of the higher order KdV
equation (\ref{hierarchy}) with step-like initial function (\ref{step})
is also determined by
a minimization problem with constraints \cite{lax, lax2, ven}
\begin{equation}
\label{mini}
\underset{\{\psi \geq 0, \  \psi \in L^1 \}}
{\rm
Minimize}
\{ - \frac{1}{2 \pi} \int_0^1 \int_0^1 \log \Big|\frac{\eta - \mu}
{\eta + \mu
}\Big|
\psi(\eta) \psi(\mu) d \eta d \mu + \int_0^1 [\eta x - 4^m \eta^{2m+1} t]
\psi(\eta) d \eta \} \ .
\end{equation}

In this section, we will use the self-similar solution of Section 3 to construct
the minimizer for $m \geq 2$. The $m=2$ result has already been obtained in \cite{PT}. 
Even in the $m=2$ case, the
key calculations presented here are different from those in \cite{PT}.

We first define a linear operator
$$L \psi(\eta) = {1 \over 2 \pi} \int_0^1 log \left ( {\eta - \mu \over \eta + \mu} \right )^2
\psi(\mu) d \mu \ .$$
The variational conditions are
\begin{eqnarray}
L \psi = x \eta - 4^m t \eta^{2m+1} \quad \mbox{where $\psi > 0$} \ ,  \label{con1} \\
L \psi \leq x \eta - 4^m t \eta^{2m+1} \quad \mbox{where $\psi = 0$}  \ . \label{con2}
\end{eqnarray}
The constraint for the minimization problem is
\begin{equation}
\label{con}
\psi \geq 0 \ .
\end{equation}

The minimizer of (\ref{mini}) is given explicitly:

\begin{thm}
The minimizer of the variational problem (\ref{mini}) is as follows:
\begin{enumerate}

\item For $x \leq \alpha t$,
\begin{equation*}
\psi(\eta) = {- x \eta + 4^m (2m+1) t \eta {P_m(\eta^2, 1, u^*, u^*) \over \eta^2 - u^*}
    \over \sqrt{1 - \eta^2}} \ .
\end{equation*}

\item For $\alpha t < x < \beta t$,
\begin{equation*}
\psi(\eta) = \left \{ \begin{matrix} - {-x \eta P_0(\eta^2, 1, u_2, u_3) + 4^m (2m+1) t \eta P_m(\eta^2, 1, u_2, u_3)
\over \sqrt{(1 - \eta^2) (u_2 - \eta^2)(u_3 - \eta^2)}} & \quad 0 < \eta < \sqrt{u_3} \\
0 & \quad \sqrt{u_3} < \eta < \sqrt{u_2} \\
{-x \eta P_0(\eta^2, 1, u_2, u_3) + 4^m (2m+1) t \eta P_m(\eta^2, 1, u_2, u_3)
\over \sqrt{(1 - \eta^2) (\eta^2 - u_2)(\eta^2 - u_3)}} & \quad \sqrt{u_2} < \eta < 1 \ ,\end{matrix} \right. 
\end{equation*}
where $P_0$ and $P_m$ are defined in (\ref{eq15}) and $u_2$ and $u_3$ are determined by equations (\ref{ws1}).

\item For $\beta t < x < 4^m t$,
\begin{equation*}
\psi(\eta) = \left \{ \begin{matrix} 0 & 0 < \eta < \sqrt{u_2} \\
{-x P_0(\eta^2, 1, u_2, 0) + 4^m (2m+1) t P_m(\eta^2, 1, u_2, 0)
\over \sqrt{(1 - \eta^2) (\eta^2 - u_2)}} &  \sqrt{u_2} < \eta < 1 \ , \end{matrix} \right. 
\end{equation*}
where $u_2$ is determined by (\ref{ws2}).

\item For $x \geq 4^m t$, $$\psi(\eta) \equiv 0 \ .$$
\end{enumerate}

\end{thm}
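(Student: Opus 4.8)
The plan is to verify the two variational conditions (\ref{con1}) and (\ref{con2}) together with the constraint (\ref{con}) for each of the four cases, using the self-similar solution of Theorem \ref{main3} to supply the parameters $u_2$, $u_3$. The central observation is that the candidate densities $\psi$ are built from the polynomials $P_0$ and $P_m$ of Section 2, whose defining normalizations (\ref{eq16}) and (\ref{eq17}) are precisely what one needs: condition (\ref{eq16}) controls the decay of the associated resolvent-type integral $L\psi$, and condition (\ref{eq17}) is the vanishing of a period integral over $(u_3,u_2)$, which will turn into the matching/consistency conditions among $x$, $t$, $u_2$, $u_3$. So the first step is to set up, for each case, the Cauchy-type integral $g(\zeta) = \frac{1}{2\pi}\int_0^1 \log\frac{\zeta-\mu}{\zeta+\mu}\,\psi(\mu)\,d\mu$ viewed as an analytic function off the support of $\psi$ (in the variable $\zeta$ or $\eta^2$), compute its jump across the cuts, and recognize that $L\psi(\eta)$ is the boundary value on the real axis.

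\medskip

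Next I would handle the two ``trivial'' cases (1) and (4) as warm-ups. For $x\ge 4^m t$ the density is identically zero and one must only check the inequality (\ref{con2}): $0 \le x\eta - 4^m t\eta^{2m+1}$ for $0<\eta<1$, which holds because $x/t \ge 4^m \ge 4^m\eta^{2m}$. For $x\le\alpha t$ one must check $\psi\ge 0$ on $(0,1)$ and the equality (\ref{con1}) everywhere on $(0,1)$; here $P_m(\eta^2,1,u^*,u^*)/(\eta^2-u^*)$ is a genuine polynomial because $u^*=u_2=u_3$ makes $\eta^2-u_2=\eta^2-u_3$, and the cancellation of the $\sqrt{u_2-\eta^2}\sqrt{u_3-\eta^2}$ factor leaves an expression with no singularity inside $(0,1)$. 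The positivity of $\psi$ on this ray and the defining properties of $P_m$ then give (\ref{con1}); and the value $\alpha = \mu_2^{(m)}(1,u^*,u^*)$ is exactly the threshold at which $\psi\ge 0$ begins to fail, which is why $u^*$ is determined by $p_1(1,u^*,u^*)=0$ via Lemma \ref{u*} and formula (\ref{**}).

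\medskip

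The substantive cases are (2) and (3). For case (3), $\beta t<x<4^m t$, the support of $\psi$ is $(\sqrt{u_2},1)$; I would show that the function in the numerator, $-xP_0(\eta^2,1,u_2,0)+4^m(2m+1)tP_m(\eta^2,1,u_2,0)$, is positive on $(\sqrt{u_2},1)$ — this is where Lemma \ref{Sws2} and the sign information $\frac{\partial\mu_2^{(m)}}{\partial u_2}>0$ from (\ref{*})--(\ref{**}) enter — so that (\ref{con}) holds. The equality (\ref{con1}) on the support follows from the asymptotic normalization (\ref{eq16}) of $P_m$ relative to $P_0$ and the period vanishing (\ref{eq17}), the latter being equivalent to the relation $x=\mu_2^{(m)}(1,u_2,0)t$ of (\ref{ws2}); the inequality (\ref{con2}) on $(0,\sqrt{u_2})$ follows from a sign analysis of $g$ along that sub-interval, using that $g$ is monotone there because its derivative is a Cauchy integral of the positive density. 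Case (2) is the same story with two cuts, $(0,\sqrt{u_3})$ and $(\sqrt{u_2},1)$: positivity of $\psi$ on each piece, equality (\ref{con1}) from (\ref{eq16})--(\ref{eq17}) now encoding \emph{both} relations $x=\mu_2^{(m)}(1,u_2,u_3)t$ and $x=\mu_3^{(m)}(1,u_2,u_3)t$ of (\ref{ws1}), and the inequality (\ref{con2}) on the gap $(\sqrt{u_3},\sqrt{u_2})$.

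\medskip

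I expect the main obstacle to be case (2)'s inequality (\ref{con2}) on the middle gap $(\sqrt{u_3},\sqrt{u_2})$ and the matching of the three pieces across $\eta=\sqrt{u_3}$ and $\eta=\sqrt{u_2}$: one must check that the sign of $L\psi - (x\eta-4^m t\eta^{2m+1})$ does not change inside the gap, which amounts to showing that a certain polynomial (essentially $Q$ of (\ref{Q}), or the quantity $-xP_0+4^m(2m+1)tP_m$ evaluated just off the cuts) has no zero in $(u_3,u_2)$. Here I would use the hard-won sign facts from Section 3 — namely (\ref{inq}) (i.e. $p_1<0$, $p_2>0$ on the solution of (\ref{m23})), Lemma \ref{p12>}, and the monotonicity (\ref{dia2}) — translated into statements about that polynomial. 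The degree-$m$ nature of $q$ (and hence of the relevant polynomials) for $m>2$ means one cannot just factor by hand as in the $m=2$ case of \cite{PT}; instead the argument must be the indirect ``count the zeros via repeated differentiation and the endpoint signs'' technique already used in the proofs of Lemmas \ref{u*} and \ref{p12}. Once the no-extra-zero claim is in place, the remaining verifications are routine asymptotic bookkeeping with $P_0$, $P_m$, $K$ and $E$.
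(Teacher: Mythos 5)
Your overall framework agrees with the paper's: extend $\psi$ oddly, realize $L\psi$ through the Hilbert transform as the boundary value of an explicit analytic function built from $P_0$ and $P_m$ (this is exactly how the paper gets $L\psi=x\eta-4^mt\eta^{2m+1}$ on the bands, using only the decay normalization (\ref{eq16}) and the vanishing period (\ref{eq17})), and cases (1) and (4) are handled essentially as you describe, with the positivity in case (1) resting on Lemma \ref{u*} via the identity $p_1(1,\xi,\xi)=C_m[(1-\xi)\partial_\xi U_0(\xi,1)-U_0(\xi,1)]$ (your citation of (\ref{**}), which concerns $p_1(1,u_2,0)$, is the wrong identity, and you omit the actual argument, namely that $(1-\xi)C_mU_0(\xi,1)$ is maximized at $\xi=u^*$). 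But there are two genuine problems. First, a conceptual one: (\ref{eq17}) holds identically for every $(u_1,u_2,u_3)$ by the very definition of $P_n$, so it cannot ``turn into'' or ``encode'' the relations $x=\mu_2^{(m)}t$, $x=\mu_3^{(m)}t$ of (\ref{ws1})--(\ref{ws2}). Those relations are equivalent to the vanishing of the numerator $-xP_0+4^m(2m+1)tP_m$ at the band edges $\eta^2=u_2$ (and $u_3$), since $\mu_i^{(m)}=4^m(2m+1)P_m(u_i)/P_0(u_i)$; this is where the self-similar solution actually enters, and your proof plan never uses it in that form.

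Second, and more seriously, the hard steps --- the constraint $\psi\ge 0$ on the bands in cases (2)--(3) and the inequality (\ref{con2}) on the gap $(\sqrt{u_3},\sqrt{u_2})$ --- are left as a sketch whose proposed method is unlikely to work. You suggest proving that $-xP_0+4^m(2m+1)tP_m$ has no zero in the gap by the ``endpoint signs plus repeated differentiation'' technique of Lemmas \ref{u*} and \ref{p12}; but that technique exploits the explicit Euler--Poisson--Darboux structure of $q$, whereas the coefficients $a_{n,k}$ of $P_0$ and $P_m$ involve the elliptic integrals $K$ and $E$ through (\ref{eq16})--(\ref{eq17}), so the zero-counting argument does not transfer, and the sign facts (\ref{inq}), Lemma \ref{p12>}, (\ref{dia2}) are not in the right currency to control this transcendental-coefficient polynomial. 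The paper's actual device is different and is the heart of the proof: by identity (\ref{Pm0}) combined with $x=\mu_2^{(m)}t=\mu_3^{(m)}t$, the $x$-derivative of $[-xP_0+4^m(2m+1)tP_m]/\sqrt{(1-\eta^2)(u_2-\eta^2)(\eta^2-u_3)}$ collapses to $\mp\,\eta P_0/\sqrt{\cdots}$; integrating in $x$ from the curve $x=\chi(\eta^2)t$ (where the numerator vanishes because $\eta^2$ equals $u_2$ or $u_3$ there) expresses both $\psi$ and the gap integral in (\ref{Lpsi}) as integrals of $P_0$ alone, whose sign is elementary since $P_0$ is linear in $\eta^2$ with its unique zero in $(u_3,u_2)$ by (\ref{eq17}); for the gap inequality one additionally rewrites the single integral as a double integral, interchanges the order, and uses (\ref{eq17}) once more. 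Without this reduction (or a worked-out substitute), your proposal does not establish (\ref{con}) or (\ref{con2}) in cases (2) and (3).
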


\begin{proof}

We extend the function $\psi$ defined on $[0,1]$ to the entire real line by setting
$\psi(\eta) = 0$ for $\eta > 1$ and taking $\psi$ to be odd. In this way, the operator
$L$ is connected to the Hilbert transform $H$ on the real line \cite{lax}:
$$L \psi(\eta) = \int_0^{\eta} H \psi(\mu) d \mu \quad \mbox{where} \  H \psi (\eta) = {1 \over \pi}
P.V. \int_{- \infty}^{+ \infty} {\psi(\mu) \over \eta - \mu} d \mu \ .$$

We verify case (4) first. Clearly $\psi(\eta) = 0$ satisfies the constraint (\ref{con}).
We now check the variational conditions (\ref{con1}-\ref{con2}). Since $\psi=0$,
$$L \psi = 0 \leq x \eta - 4^m t \eta^{2m+1} \ ,$$
where the inequality follows from $x \geq 4^m t$ and $0 \leq \eta \leq 1$. Hence,
variational conditions (\ref{con1}-\ref{con2}) are satisfied.

Next we consider case (1). We write $\psi(\eta)$ as the real part of $g_1(\eta)$ for real $\eta$,
where
$$g_1= \sqrt{-1}(x - 4^m (2m+1) t \eta^{2m}) + { \sqrt{-1} [-x \eta + 4^m (2m+1) t \eta 
{P_m(\eta^2, 1, u^*, u^*) \over \eta^2 - u^*}
] \over \sqrt{\eta^2 -1}} \ .$$
The function $g_1$ is analytic in the upper
half complex plane $Im (\eta) > 0$ and $g_1(\eta) \approx O(1/\eta^2)$ for large $|\eta|$ 
in view of the expansion (\ref{eq15}) for $P_m$.
Hence, $H \psi (\eta) = Im [g_1(\eta)] = x - 4^m (2m+1) t \eta^{2m}$ on $0 \leq \eta \leq 1$, where $H$ is the
Hilbert transform \cite{lax}. We then have for $0 \leq \eta \leq 1$
$$L \psi (\eta) = \int_0^{\eta} H \eta(\mu) d \mu = x \eta - 4^m t \eta^{2m + 1} \ ,$$
which shows that the variational conditions (\ref{con1}) and (\ref{con2}) are satisfied. 

To prove (\ref{con}), we first claim that
\begin{equation}
\label{1}
- \alpha + 4^m (2m+1) {P_m(\eta^2, 1, u^*, u^*) \over \eta^2 - u^*} \geq 0 \ ,
\end{equation}
for $0 \leq \eta \leq 1$. To see this, we use (\ref{tr}) and (\ref{q}) to 
calculate $\alpha =
\mu_2^{(m)}(1, u^*, u^*)$ and (\ref{P}) and (\ref{Q}) to evaluate $P_m(1, u^*, u^*)$.
The left hand side of (\ref{1}) equals
\begin{align}
2(\eta^2 - 1)&(\eta^2 - u^*) \Phi(\eta^2, 1, u^*, u^*) +
2(\eta^2 - u^*) [ {\p q(1, u^*, u^*) \over \p u_1} + {\p q(1, u^*, u^*) \over \p u_2}
+ {\p q(1, u^*, u^*) \over \p u_3} ]  \no \\
 &= 2(\eta^2 -1) [C_m U(\eta^2, 1) - C_m U(u^*, 1)] + 2(\eta^2 - u^*) C_m
U(u^*, 1) \no \\
 &= 2[(1-u^*) C_mU(u^*, 1) - (1-\eta^2) C_m U(\eta^2,1)] \ , \label{2} 
\end{align}
where in the first equality we have used (\ref{divq}) and the identity (cf. (3.28) of \cite{gra})
$$\Phi(\eta^2, 1, u^*, u^*) = {C_m U(\eta^2, 1) - C_m U(u^*, 1) \over \eta^2 - u^*}
\ .$$
In view of (\ref{U}), we have
$${d \over d (\eta^2)} [(1-\eta^2) C_m U(\eta^2,1)] =
C_m[(1-\eta^2) {\p U(\eta^2, 1) \over \p \eta^2} -  U(\eta^2, 1)] 
= p_1(1, \eta^2, \eta^2) \ .$$
This derivative is positive when $\eta^2 < u^*$ and negative
when $\eta^2 > u^*$ because $p_1$ is so according to Lemma \ref{u*}. 
Therefore, $(1-\eta^2) C_m U(\eta^2,1)$ has a maximum 
at $\eta^2 = u^*$; i.e., $(1-\eta^2) C_m U(\eta^2,1) \leq 
(1-u^*) C_m U(u^*,1)$
for $0 \leq \eta^2 \leq 1$. This together with (\ref{2}) proves (\ref{1}).

It follows from $x \leq \alpha t$ and inequality (\ref{1}) that $\psi \geq 0$.
Hence, the constraint (\ref{con}) is verified.

We now turn to case (2). By Lemma 3.5, the last two equations of (\ref{ws1}) determine
$u_2$ and $u_3$ as functions of the self-similarity variable $x/t$ in the interval
$\alpha \leq x/t \leq \beta$.

We write $\psi = Re \left(g_2\right) $ for real $\eta$, where
$$g_2 = \sqrt{-1}(x - 4^m (2m+1) t \eta^{2m}) + { \sqrt{-1} [-x \eta P_0(\eta^2, 1, u_2, u_3) + 
4^m (2m+1) t \eta P_m(\eta^2, 1, u_2, u_3)
]  \over \sqrt{\eta^2 -1)(\eta^2 - u_2)(\eta^2 - u_3)}} \ .$$

The function $g_2$ is analytic in $Im (\eta) > 0$ and $g_2(\eta) \approx O(1/\eta^2)$ for large $|\eta|$
in view of the asymptotics (\ref{eq16}) for $P_0$ and $P_m$. Hence, taking the imaginary part of $g_2$
yields
$$ H \psi(\eta) = \left \{ \begin{matrix} x - 4^m (2m+1) t \eta^{2m} & 0 < \eta < \sqrt{u_3} \\
       x - 4^m (2m+1) t \eta^{2m} - {[-x P_0(\eta^2, 1, u_2, 0) + 4^m (2m+1) t P_m(\eta^2, 1, u_2, 0)] \eta
\over \sqrt{(1 - \eta^2) (u_2 - \eta^2)(\eta^2 - u_3)}} &  \sqrt{u_3} < \eta < \sqrt{u_2} \\
 x - 4^m (2m+1) t \eta^{2m} & \sqrt{u_2} < \eta < 1 \ . \end{matrix} \right. $$
We then have
\begin{equation}
\label{Lpsi}
L \psi (\eta) =  \left \{ \begin{matrix} x \eta - 4^m t \eta^{2m+1} & 0 < \eta < \sqrt{u_3} \\
 x \eta - 4^m t \eta^{2m+1} - \int_{\sqrt{u_3}}^{\eta} {[-x P_0 + 4^m (2m+1) t P_m] \mu \over \sqrt{(1 - \mu^2)
(u_2 - \mu^2) (\mu^2 - u_3)} } d \mu & \sqrt{u_3} < \eta < \sqrt{u_2} \\
 x \eta - 4^m t \eta^{2m+1} & \sqrt{u_2} < \eta < 1 \ , \end{matrix} \right. 
\end{equation}
where we have used
\begin{equation*}
\int_{\sqrt{u_3}}^{\sqrt{u_2}} {[-x P_0 + 4^m (2m+1) t P_m] \mu \over \sqrt{(1 - \mu^2)
(u_2 - \mu^2) (\mu^2 - u_3)} } d \mu = 0 \ ,
\end{equation*}
which is a consequence of (\ref{eq17}) for $P_0$ and $P_m$.

To verify (\ref{con}), we derive an integral formula for $\psi$. We use (\ref{Pm0}) and (\ref{ws1}) to calculate
\begin{equation*}
\psi_x(\eta) = \left \{ \begin{matrix}  {\eta P_0(\eta^2, 1, u_2(x,t), u_3(x,t)) 
\over \sqrt{(1 - \eta^2) (u_2(x,t) - \eta^2)(u_3(x,t) - \eta^2)}} & \quad 0 < \eta < \sqrt{u_3(x,t)} \\
0 & \quad \sqrt{u_3(x,t)} < \eta < \sqrt{u_2(x,t)} \\
{- \eta P_0(\eta^2, 1, u_2(x,t), 0)
\over \sqrt{(1 - \eta^2) (\eta^2 - u_2(x,t))(\eta^2 - u_3(x,t))}} & \quad \sqrt{u_2(x,t)} < \eta < 1 \ . \end{matrix} \right. 
\end{equation*}
Integrating yields
\begin{equation}
\label{int}
\psi(\eta) = \left \{ \begin{matrix}  - \int_{x}^{\chi(\eta^2) t} {\eta P_0(\eta^2, 1, u_2(y,t), u_3(y,t)) 
\over \sqrt{(1 - \eta^2) (u_2(y,t) - \eta^2)(u_3(y,t) - \eta^2)}} d y & \quad 0 < \eta < \sqrt{u_3(x,t)} \\
0 & \quad \sqrt{u_3(x,t)} < \eta < \sqrt{u_2(x,t)} \\
\int_x^{\chi(\eta^2) t} {\eta P_0(\eta^2, 1, u_2(y,t), 0) 
\over \sqrt{(1 - \eta^2) (\eta^2 - u_2(y,t))(\eta^2 - u_3(y,t))}} d y
& \quad \sqrt{u_2(x,t)} < \eta < 1 \ , \end{matrix} \right. 
\end{equation}
where $\chi(\eta^2)$ is defined in Figure 1.
The polynomial $P_0$ of (\ref{eq15}) is linear in $\eta^2$ and has a zero for $u_3 < \eta^2 < u_2$
because of (\ref{eq17}). $P_0$ must be positive
for $\eta^2 > u_2$ and negative for $\eta^2 < u_3$.  This combined with (\ref{int}) proves $\psi \geq 0$; so
(\ref{con}) is verified.

We now continue to verify the variational conditions (\ref{con1}) and (\ref{con2}).
Again, we use (\ref{Pm0}) and (\ref{ws1}) to calculate
\begin{gather*}
{\p \over \p x} \left( {[- x P_0(\mu^2, 1, u_2(x,t), u_3(x,t)) + 4^m (2m+1) t P_m(\mu^2, 1, u_2(x,t), u_3(x,t))]\mu
\over \sqrt{(1 - \mu^2)(u_2(x,t) - \mu^2)(\mu^2 - u_3^2(x,t))}} \right) \\
 ={ - \mu P_0(\mu^2, 1, u_2(x,t), u_3(x,t))
\over \sqrt{(1 - \mu^2)(u_2(x,t) - \mu^2)(\mu^2 - u_3^2(y,t))}} \ . 
\end{gather*}
Integrating yields 
\begin{gather*}
{[- x P_0(\mu^2, 1, u_2(x,t), u_3(x,t)) + 4^m (2m+1) t P_m(\mu^2, 1, u_2(x,t), u_3(x,t))]\mu
\over \sqrt{(1 - \mu^2)(u_2(x,t) - \mu^2)(\mu^2 - u_3^2(x,t))}} \\
 = - \int_{\chi(\mu^2) t }^x 
{ \mu P_0(\mu^2, 1, u_2(y,t), u_3(y,t)) \over \sqrt{(1 - \mu^2)(u_2(y,t) - \mu^2)(\mu^2 - u_3^2(y,t))}} d y 
\end{gather*}
for $\sqrt{u_3(x,t)} < \mu < \sqrt{u_2(x,t)}$. The single integral in (\ref{Lpsi}) can then be written as a double integral.
After interchanging integrals and using (\ref{eq17}) for $P_0$, the double integral is simplified as
$$- \int_{\chi(\eta^2)t}^x \int_{\sqrt{u_3(y,t)}}^{\eta} {\mu P_0(\mu^2, 1, u_2(y,t), u_3(y,t)) \over
\sqrt{(1 - \mu^2)(u_2(y,t) - \mu^2)(\mu^2 - u_3^2(y,t))} } \ d \mu d y \ .$$
The polynomial $P_0$ is linear in $\mu^2$ and has a zero for $u_3(y,t) < \mu^2 < u_2(y,t)$. In view of (\ref{eq17})
for $P_0$, we must have 
$$\int_{\sqrt{u_3(y,t)}}^{\eta} {\mu P_0(\mu^2, 1, u_2(y,t), u_3(y,t)) \over
\sqrt{(1 - \mu^2)(u_2(y,t) - \mu^2)(\mu^2 - u_3^2(y,t))} } \ d \mu < 0 $$
for $\sqrt{u_3(y,t)} < \eta < \sqrt{u_2(y,t)}$. Hence, the integral in (\ref{Lpsi}) is positive and
this verifies the variational conditions (\ref{con1}) and (\ref{con2}).

We finally consider case (3). By Lemma 3.6, the second equation of (\ref{ws2}) determines $u_2$
as an increasing function of $x/t$ in the interval $\beta \leq x/t \leq 4^m$.

We write $\psi = Re (g_3)$ for real $\eta$, where
$$g_3 = \sqrt{-1}(x - 4^m (2m+1) t \eta^{2m}) + { \sqrt{-1} [-x P_0(\eta^2, 1, u_2, 0) + 4^m (2m+1) t P_m(\eta^2, 1, u_2, 0)
]  \over \sqrt{(\eta^2 -1)(\eta^2 - u_2)}} \ .$$
The function $g_3$ is analytic in $Im (\eta) > 0$ and $g_3(\eta) \approx O(1/\eta^2)$ for large $|\eta|$
in view of the asymptotics (\ref{eq16}) for $P_0$ and $P_m$. Hence, taking the imaginary part of $g_3$
yields
$$H \psi(\eta) = \left \{ \begin{matrix} x - 4^m (2m+1) t \eta^{2m} - {-x P_0(\eta^2, 1, u_2, 0) + 4^m (2m+1)P_m(\eta^2, 1, u_2, 0)
\over \sqrt{(1 - \eta^2)(u_2 - \eta^2)}} & 0 < \eta < \sqrt{u_2} \\
x - 4^m (2m+1) t \eta^{2m}  & \sqrt{u_2} < \eta < 1 \ . \end{matrix} \right. $$
We then have
$$L \psi (\eta) = \left \{ \begin{matrix} x \eta - 4^m t \eta^{2m+1} - \int_0^{\eta} {-x P_0 + 4^m(2m+1) t P_m
\over \sqrt{(1 - \mu^2)(u_2 - \mu^2)}} d \mu & 0< \eta < \sqrt{u_2} \\
 x \eta - 4^m t \eta^{2m+1} & \sqrt{u_2} < \eta < 1 \ , \end{matrix} \right. $$
where we have used
\begin{equation*}
\int_0^{\sqrt{u_2}} {-x P_0(\mu^2, 1, u_2, 0) + 4^m(2m+1) t P_m(\mu^2, 1, u_2, 0) \over
\sqrt{(1 - \mu^2)(u_2 - \mu^2)}} d \mu = 0 \ ,
\end{equation*}
which is a consequence of (\ref{eq17}) for $P_0$ and $P_m$.

The variational conditions (\ref{con1}-\ref{con2})
and the constraint (\ref{con}) can be verified using the method for case (2).

\end{proof}

{\bf Acknowledgments.} 
We thank Tamara Grava for some of the ideas leading to the proof of Lemma \ref{u*}.
V.P. was supported in part by NSF Grant DMS-0135308. F.-R. T. was supported in part by
NSF Grant DMS-0404931.

\bibliographystyle{amsplain}

\end{document}